  \renewcommand{\todo}[2][1]{}
  \newcommand{\newedit}[1]{#1}
  \newcommand{\newedit}[1]{{\color{green} #1}}
\newcommand{\I}[1]{\ensuremath{\mathbb{I}_{\left\{#1\right\}}}} 
\newcommand{\E}{\ensuremath{\mathbb{E}}}
\newcommand{\paraheader}[1]{\smallskip\noindent{\sffamily\bfseries #1}}
\begin{document}

\title{Latency Advantages in Common-Value Auctions}

\provideboolean{anonymousversion}

\ifthenelse{\boolean{anonymousversion}}{
  \author{Anonymized}
  \authorrunning{Anonymous}
}{
\author{Ciamac Moallemi \inst{1,2,3} \orcidID{0000-0002-4489-9260}\and
 Mallesh M Pai \inst{4,5}\orcidID{0000-0001-9989-6676} \and
 Dan Robinson \inst{2,5}\orcidID{0009-0002-2702-0055}}
\authorrunning{Moallemi et al.}
%
\institute{Columbia University, USA \and
Paradigm, USA \and
Uniswap Labs, USA \and
Rice University, Houston, USA \and
Tempo Labs, San Francisco, USA}

}

\maketitle

\begin{abstract}
  In financial applications, latency advantages---the ability to make decisions later than others,
  even without the ability to see what others have done---can provide individual participants with
  an edge by allowing them to gather additional relevant information. For example, a trader who is
  able to act even milliseconds after another trader may receive information about changing prices
  on other exchanges that lets them make a profit at the expense of the latter. To better
  understand the economics of latency advantages, we consider a common-value auction with a
  reserve price in which some bidders may have more information about the value of the item than
  others, e.g., by bidding later.  We provide a characterization of the equilibrium strategies,
  and study the welfare and auctioneer revenue implications of the last-mover advantage. We show
  that the auction does not degenerate completely and that the seller is still able to capture
  some value. We study comparative statics of the equilibrium under different assumptions about
  the nature of the latency advantage. Under the assumptions of the Black-Scholes model, we derive
  formulas for the last mover's expected profit, as well as for the sensitivity of that profit to
  their timing advantage. We apply our results to the design of blockchain protocols that aim to
  run auctions for financial assets on-chain, where incentives to increase timing advantages can
  put pressure on the decentralization of the system.
\end{abstract}



\section{Introduction}\label{sec:intro}

In financial applications, latency advantages---the ability to make decisions later than others, even without the ability to see what others have done---can give individual participants an
edge. In particular, it allows the late mover the chance to receive and act on additional
information. For example, a trader who is able to act even milliseconds after another trader may
receive information about changing prices on other exchanges that lets them make a profit at the
expense of the latter \citep[see, for example, the discussion of stale order sniping in][]{budish}. Such advantages may deter other participants from engaging in the auction,
reducing the efficiency of markets. Nevertheless, the economics and comparative statics of
latency advantages remain unexplored. This paper provides a model of such latency
advantages in the context of auctions, and studies the economics and comparative statics of such
latency advantages.

We consider the case of an auction for a volatile asset with a reserve price. Bidders have a
common value for the asset.  To fix ideas, think of the asset being auctioned as a fixed quantity
of a stock or token, with the realized value for the winner being the prevailing market price at
the time the auction settles. Bids are received over time, and the market price of the underlying asset is observed by bidders as they bid. Later movers therefore have more information about the likely price of the asset at the time of settlement than earlier movers. We consider the case where bidders do not observe each others' bids at the time of bidding.\footnote{If later movers observe the bids of the earlier movers prior to their own bid, the auction degenerates due to an extreme lemons problem: the last mover can always outbid the earlier movers when it is profitable to do so.} At settlement, the highest bid wins the auction and pays their bid, their net profit being the difference between the prevailing price of the asset and their bid.

Our main contributions%
\ifthenelse{\boolean{camerareadyversion}}{%
  \footnote{Proofs can be found in the appendix to this paper, available at
  \url{https://arxiv.org/abs/2504.02077}.}
}{}%
can be summarized as follows. We characterize the equilibrium strategy in
this setting. While our results generalize to many bidders, for simplicity, consider for now the case of two bidders, Alice and Bob, where Bob moves after Alice. Alice has a mixed strategy, where she ``draws'' a random value from the known distribution of values and bids the conditional expected value conditioned on it being lower than her random draw. Bob, who sees the true realized value but not Alice's bid, bids the larger of the conditional expectation conditioned on being lower than the true realized value, or the reserve price, whenever the latter is profitable. When the reserve price is 0, this reduces to the solution given in \citet{engelbrecht}.

We show that the profit of the last mover (and equivalently, the loss of the seller) can be
expressed as a portfolio of certain types of options on the underlying asset. We compare this to
the profit of a single buyer (which is just the value of a call \newedit{option, i.e., an option
  to buy the asset at a fixed price}). This allows us to perform
comparative statics on the profit of the last mover, as functions of the reserve price, the
distribution of the value, and the timing advantage. In particular, for this, we assume that the
distribution of the value is governed by a geometric Brownian motion, so that we can do
comparative statics with respect to time and volatility, and compare with standard results from
the literature on option pricing (e.g., the Black-Scholes formula).  We quantify the ``timing
pressure'' caused by the last-mover advantage. Specifically, we measure the marginal profit that
the last mover gains from increasing their timing advantage, which can also be interpreted as the
``theta'' (time sensitivity) of this option.

\subsection{Applications}
Understanding the economics of latency advantages is important for a number of financial applications, but the main application we consider in this paper is to the design of (applications on) blockchain protocols.

In particular, many kinds of applications on a blockchain can be implemented as an auction, so the ability to run on-chain auctions would let applications capture value from financial transactions. Examples include:
\begin{itemize}
\item Decentralized exchange routers that auction off a user's market order to get them the best possible price,
\item NFT launches or marketplaces that auction off an NFT,
\item Automated Market Makers (AMMs) that auction off the right to trade first in order to capture
  loss-vs-rebalancing \citep[see][]{milionis2022automated} for their liquidity providers, 
\item Orderbooks that auction off the right to trade against each limit order,
\item Lending protocols that auction off collateral during liquidation,
\item Wallets that auction off the right to backrun arbitrary user interactions, for the benefit of that user
\end{itemize}

To allow on-chain auctions to work for the benefit of applications and users, rather than leaking significant amounts of value to a monopolist proposer/ block builder, the platform needs to guarantee certain properties. Enforcing some of these properties trustlessly is difficult.

\paraheader{Transaction ordering.} For auctions that settle within a single block, we need some
rules about transaction ordering. For example, ordering transactions in descending order of
priority fees (combined with the other properties here) allows some kinds of efficient same-block
auctions \citep[see, e.g.,][]{priority}. Alternatively, applications can have auctions settle in a subsequent block. Ordering rules are often easy to enforce at the protocol level since they can be part of the validity condition for blocks.

\paraheader{Censorship resistance.} A key desideratum for on-chain auctions is censorship
resistance. Without it,  the miner will simply censor all competing bids
\citep[see, e.g.,][]{fox2023censorshipresistanceonchainauctions}. One solution, as suggested in that paper, is
to have multiple concurrent proposers. Another is to run the auction over multiple blocks, though
this exacerbates the last mover advantage and creates other issues such as latency for the users.

\paraheader{Pre-transaction privacy.} Without privacy of transactions, a bidder can penny everyone
else's bid (i.e. outbid by a very small amount whenever it is profitable to do so). When combined
with even a tiny informational advantage, this can cause an auction to unravel, as shown in
\citet{pai2023structuraladvantagesintegratedbuilders}. Potential solutions to achieve this
on-chain use some form of encryption, including commit-reveal schemes, threshold encryption
\citep[such as][]{threshold}, or timelock encryption \citep[such as][]{timelock}.

\paraheader{Last mover advantage.} Finally, we need to minimize last mover advantage, since otherwise the last mover may have superior information (such as information about prices on centralized exchanges). This potentially discourages other bidders from bidding their full valuation.

\bigskip

That last one---the last mover advantage---is the most difficult of these advantages to mitigate in a decentralized protocol. This is because delays by a single party in a trustless setting could simply be the result of network delays etc. Further, some protocols---even ones that attempt to guarantee censorship resistance---have explicit last movers, such as Multiplicity \citep{multiplicity2023} and FOCIL \citep{thiery2024focil}. Leaderless auctions \citep{leaderless} is an attempt at a protocol with no last mover, but it requires $\frac{2}{3}$ honest participants and a network synchrony assumption.

The potential profit from being the last mover can incentivize participants to invest money and
effort in acquiring or increasing a timing advantage, potentially undermining the protocol. The
incentive for trading firms to invest in an ``arms race'' around low-latency communications and
computational infrastructure has been well-studied in the context of traditional finance,
including in \citet{budish}. In the decentralized context, this can lead to ``timing games'' \citep{schwarzschilling2023timemoneystrategictiming} where network participants send messages at the very last moment allowed by the protocol, which can be destabilizing. Further, it may create incentives for co-location, consolidation, collusion, and corruption among validators, all of which could reduce the effective decentralization of the protocol. It follows then that the greater the profit from increasing this advantage, the greater the ``timing pressure'' on the protocol.

\subsection{Literature Review}
In this section we briefly review the related auction literature (the applications to blockchains have already been discussed inline above).
There is a long literature on auctions with common values and asymmetric information, dating back to the PhD thesis of \citet{ortega1968models}. The most closely related work is probably \citet{engelbrecht}, who characterize the equilibrium in the case of no reserve price ($L=0$). Their analysis does not extend to the case with a positive reserve price $L > 0$ for a fundamental reason: the structure of the equilibrium changes qualitatively when a reserve price is introduced.

In the case without a reserve price ($L=0$), they show that the equilibrium has a particularly clean structure: Bob bids the conditional expectation $\mathbb{E}[\tilde{v}| \tilde{v} < v]$ for all $v > 0$, and Alice randomizes by drawing a value $v'$ and bidding $\mathbb{E}[\tilde{v}| \tilde{v} < v']$. This structure works because there is no lower bound on bids, and the tie-breaking considerations are simpler.

However, when $L > 0$, the equilibrium becomes more complex because Bob's bidding strategy must account for the reserve price constraint. Specifically, when $\mathbb{E}[\tilde{v}| \tilde{v} < v] < L$, Bob cannot bid his unconstrained optimal amount (which would be below the reserve price). Instead, Bob must either bid $L$ (if it's profitable, i.e., if $v > L$) or bid $0$ (if $v \leq L$). This creates a discontinuity (and an atom) in Bob's bidding function at the threshold value $\underline v$ defined by $L = \mathbb{E}[\tilde{v}| \tilde{v} < \underline v]$.

This discontinuity fundamentally changes the equilibrium analysis. In the framework of \citep{engelbrecht}, the smooth structure of the bidding functions allows for a relatively straightforward characterization. With a reserve price, the equilibrium requires solving for the threshold $\underline v$ that equates the reserve price with the conditional expectation, and then handling the different cases (above threshold, between $L$ and threshold, and below $L$) separately. This additional complexity is part of what we handle here. Further, our interpretation of the last mover's value in terms of options is novel and insightful in its own right, and Further, leads to novel results about the value of latency advantages and the resulting comparative statics. 

\citet{dubra2006correction} provides a complete proof that the equilibrium in this case is unique. Of course the seminal work on common value auctions is \citet{milgrom1982theory}, but it allows for all buyers to have (ex-ante symmetric) private information. \citet{hausch1987asymmetric} considers the case of asymmetrically informed bidders in a common-value auction, but restricts attention to a finite type space for tractability. Further, they consider an inherently static setting so there is no notion of a last mover or comparative statics with respect to the timing advantage.



\section{Model}\label{sec:model}

The core of our analysis is based around a standard 2-period, 2-bidder common-value auction. We describe the model in detail below, and then explain how we can either use/ reinterpret this analysis to capture several generalizations.
\begin{enumerate}
    \item There is a single good for sale.
    \item In Period 1, Alice submits a bid for the good. At this point it is only known that the
      value of the good is $v \sim F_v$, where the distribution $F_v$ has density $f_v$ with
      support $\mathbb{R}_+$. We assume that this distribution has a finite
      expectation.\footnote{We'll maintain the convention that $\tilde{v}$ is the underlying
        random variable, $v$ is a realized value. }
    \item In Period 2, nature draws $v$. Bob observes $v$ and submits a bid $\beta(v)$. However, Bob does not observe Alice's bid.
    \item There is a limit price of $L$. The winner is the highest bidder, and pays their bid, unless their bid is less than $L$, in which case the good stays unsold.
\end{enumerate}
For simplicity, we employ the following tie-breaking rule: if Alice and Bob tie, then Bob wins. This avoids the issue of Bob needing to slightly increase his bid to avoid ties. Similarly, we assume that when Bob ties with the limit price, he wins, whereas if only Alice ties with the limit price, then the good stays unsold. 

The tie-breaking rule serves two important purposes in our analysis. First, it ensures that Bob's best response is well-defined: when Bob observes a value $v$ such that his optimal bid would tie with Alice's bid, the rule that Bob wins eliminates the need for Bob to bid an infinitesimally higher amount. This avoids technical complications with discontinuous best responses. Second, and more importantly, the tie-breaking rule is crucial for ensuring that Alice's equilibrium strategy yields zero expected payoff, as we show in our equilibrium characterization. We note that alternative tie-breaking rules (e.g., random tie-breaking or Alice wins ties) would not qualitatively change our results, but would complicate the analysis. A more detailed discussion of the tie-breaking rule and its role in the equilibrium appears in Section \ref{sec:results}.

We look at standard Perfect Bayesian Nash Equilibria of this auction as a function of the primitives of the model, i.e., the distribution of $v$ (i.e, $F_v, f_v$), and the limit price $L$.

We then consider the following extensions:
\begin{enumerate}
    \item The case where there are more than two bidders follows easily and is described in \Cref{corollary:multiple_agents}.
    \item We consider the case of uncertain timing advantages where there may be only one bidder, i.e., Bob may not exist and Alice may be last mover with some probability $\alpha \in (0,1)$, and, with the remaining probability $1-\alpha$, Bob is the last mover. This is described in Section \ref{sec:uncertain-timing-advantages}.
    \item While we describe this as a two-period model, it can be straightforwardly embedded in a
      continuous-time setting where Alice moves first (at time $0$) and Bob moves at time $T > 0$,
      and the price of the object evolves over time, with the final realized value of the object
      to the winner being at settlement time $\tau > T$. This is described in
      \Cref{sec:cts}.
    \item This can also be extended to the setting where both Bob's timing advantage $T$ over Alice
      and the final settlement time $\tau$ are both stochastic.
\end{enumerate}

In what follows, after calculating the equilibrium of the model, we then show comparative statics on various quantities of interest (seller revenue, bidder surplus, etc.) on the primitives of the model and various timing parameters (e.g., probability of being last mover $\alpha$, magnitude of Bob's timing advantage $t$, etc.).

\subsection{Application to On-Chain Auctions}\label{sec:on-chain-auction}

We now describe how this abstract model applies to on-chain auctions. Suppose that there is a single order (for example, on a variant of UniswapX), corresponding to a user looking to sell a quantity $q$ normalized to $1$ of token $A$ at minimum price of $L$ (denominated in token $B$). For example this could be a limit order to sell $1$ unit of ETH at a minimum price of $L$ (denominated in USD). Alice and Bob are the two bidders in this auction, they can each only submit a single bid, corresponding to the price at which they offer to fill the order. For simplicity suppose there is only one such order, and partial fills are not allowed, i.e. the winner is the bidder with the highest bid, and pays their bid price.

The current price of token $A$ on a reference centralized exchange at the time of Alice's bid is
$p_0$, hence this represents Alice's valuation of the good at time $0$. Price evolves as a
geometric Brownian motion with zero drift  and
volatility $\sigma$.\footnote{Because of the short time scales (e.g., less
than a minute) of such an auction in practice, the drift of the price process does not
significantly impact the results, and hence we can assume zero drift. This could alternatively
be justified by valuing the auction under the risk neutral distribution (where the drift would
equal the risk-free rate), since the risk-free rate over such time scales is also zero.} When Bob bids at time $t$, the price is $p_t$. In particular, Bob knows the prevailing price $p_t$ at
the time of his bid, whereas Alice only knows the initial price $p_0$, and at the time of her bid does not know what the price of the object will be when it is sold.

One possible instantiation of this idea is a sealed-bid auction on chain that takes place over two blocks, where the two proposers (Alice as the proposer of block $N$, and Bob as the proposer of block $N+1$) are independent. Alice can place a commitment to her bid in block $N$, and Bob can place one in block $N+1$. Both must subsequently reveal their bid or sacrifice their deposit.\footnote{While this is out of scope for this paper, we also assume there is some fixed deposit amount that is always sufficient to discourage strategic withholding of bids, to prevent the issues discussed in \citet{schlegel2022onchain}.} The second proposer does not know the bid of the first proposer, but gets to submit their bid at a later point in time, which may give them additional information about the price of the token.

Another possible instantiation is a limit order that uses MEV taxes (as described in \citet{priority}) on a chain that uses some cryptographic and consensus mechanisms (such as some variation on multiple concurrent proposers, described in \citet{fox2023censorshipresistanceonchainauctions} and \citet{multiplicity2023}) to guarantee censorship resistance and privacy of transactions, but where there is one actor that has the ability to add transactions later than anyone else.

Other than grounding our model, this instantiation  allows us to consider comparative statics
exercises on the \emph{timing pressure} or 
Bob's incentive to wait, i.e., to bid at time $T+\Delta t$ with additional information about the
price, as opposed to just bidding at $T$.



\section{Results}\label{sec:results}

Our main result characterizes the equilibrium bidding strategies for Alice and Bob, and compares the expected payoffs across different parameter values. In particular, we are able to fully characterize the equilibrium strategies for both bidders as functions of the primitives of the model, and then connect the equilibrium payoff of Bob to the value of certain derivatives we describe below.

\subsection{Equilibrium in the Baseline Model}

\begin{theorem}\label{theorem:equilibrium}
    The equilibrium bidding strategies for Alice and Bob are:
    \begin{itemize}
        \item Bob bids $\beta_L(v)$ as a function of the observed value $v$ of the item:
          \begin{align}\label{eq:bob_bid}
         \beta_L(v) =   \begin{cases}
                \E[\tilde v| \tilde v < v] & \text{if $v \geq \underline v$}, \\
                L & \text{if $L < v < \underline v$}, \\
                0 & \text{if $v \leq L$}
            \end{cases}
        \end{align}
      where $\underline v$ is the value such that
    \begin{align}\label{eq:L_def}
        L= \mathbb{E}[\tilde{v}| \tilde{v} < \underline v].
    \end{align}
     \item Alice's (mixed) bidding strategy can be described as follows: she first randomly draws a value $v'$ from the known distribution $F_v$. She then submits a bid $b_a$ equal to $\beta_L(v')$.
    \end{itemize}
\end{theorem}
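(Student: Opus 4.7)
The plan is to verify that the proposed profile is a mutual best response by direct computation. Throughout, I will use the identity
\begin{equation*}
\int_0^{v''} u \, f_v(u) \, du \;=\; \E[\tilde v \mid \tilde v < v''] \cdot F_v(v''),
\end{equation*}
which is just the definition of conditional expectation. As a preliminary, I translate Alice's mixed strategy into its induced bid distribution: an atom of mass $F_v(L)$ at bid $0$, an atom of mass $F_v(\underline v) - F_v(L)$ at bid $L$, and on $(L, \E[\tilde v])$ a continuous piece in which the bid $b = \E[\tilde v \mid \tilde v < v']$ (parameterized by $v' \geq \underline v$) has cumulative probability $F_v(v')$.

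Next I verify Bob's best response. Bids strictly below $L$ are equivalent to not bidding, so any candidate bid at or above $L$ can be written as $b = \E[\tilde v \mid \tilde v < v'']$ for some $v'' \geq \underline v$. Under Alice's distribution, Bob (who wins ties) beats Alice exactly when $v' \leq v''$, an event of probability $F_v(v'')$, so his expected profit equals
\begin{equation*}
(v - b) F_v(v'') \;=\; v\, F_v(v'') \;-\; \int_0^{v''} u \, f_v(u) \, du
\end{equation*}
by the identity. Differentiating in $v''$ yields $(v - v'') f_v(v'')$, so the unconstrained maximizer is $v'' = v$. If $v \geq \underline v$ this is feasible and reproduces the prescribed bid $\E[\tilde v \mid \tilde v < v]$. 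If $L < v < \underline v$ the profit is strictly decreasing on $[\underline v, \infty)$, so the constrained maximizer is the boundary $v'' = \underline v$, i.e.\ bid $L$, with strictly positive profit $(v - L) F_v(\underline v)$. If $v \leq L$ even this bid is nonpositive, so Bob optimally does not bid.

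Finally I verify Alice's best response. For a bid $b = \E[\tilde v \mid \tilde v < v'']$ with $v'' > \underline v$, Alice (who loses ties to Bob) wins exactly when $\beta_L(v) < b$; this is automatic for $v < \underline v$ since $\beta_L(v) \in \{0, L\}$ and $b > L$, and for $v \geq \underline v$ it requires $v < v''$. Her expected profit is therefore
\begin{equation*}
\int_0^{v''}(v - b) f_v(v) \, dv \;=\; b F_v(v'') - b F_v(v'') \;=\; 0
\end{equation*}
by the identity. At the atoms, bid $0$ trivially yields zero, and bid $L$ yields zero since Alice either loses to Bob or ties with the limit price and the good goes unsold; bids in $(0, L)$ fail the reserve, and bids exceeding $\E[\tilde v]$ yield strictly negative profit. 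Hence every bid in Alice's support is a best response and no deviation is profitable. The main subtlety will be carefully tracking the discontinuities introduced by the reserve price $L$---the atoms at $0$ and $L$ in Alice's distribution, the piecewise structure of $\beta_L$ at $v = \underline v$, and the tie-breaking convention at $L$, which is precisely what makes the atom at $L$ a genuine best response rather than one Alice would want to outbid infinitesimally.
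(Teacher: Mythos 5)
Your proof is correct, and it takes a genuinely different route from the paper's. The paper proceeds by derivation: it first establishes necessary conditions (no pure strategy for Alice, Alice's zero expected payoff), uses Alice's indifference to pin down $b_A = \E[\tilde v \mid \tilde v < \beta_L^{-1}(b_A)]$, and then verifies Bob's optimality indirectly by writing down Bob's first-order condition $(v-\beta_0(v))f_a(\beta_0(v)) = F_a(\beta_0(v))$, solving the resulting ODE for $H(v)=F_a(\beta_0(v))$, and checking that the solution coincides with $F_v$. You instead verify the profile directly: you parameterize Bob's candidate bids by $v''$, compute his profit $(v-b)F_v(v'')$ in closed form, and show it is single-peaked at $v''=v$ via the sign of $(v-v'')f_v(v'')$. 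This buys you two things the paper's argument handles less cleanly: global (not merely first-order) optimality of Bob's bid, and an explicit treatment of the reserve-price region as a constrained maximization with boundary solution $v''=\underline v$ (the paper disposes of $L>0$ with a one-sentence "follows straightforwardly"). The paper's derivation, by contrast, explains \emph{where} the strategies come from and sets up the machinery reused later for the uncertain-timing extension. Two minor points to tighten: your claim that every bid at or above $L$ has the form $\E[\tilde v\mid\tilde v<v'']$ omits bids $\geq \E[\tilde v]$, which win with probability one and are dominated by the limit of your parameterized family, and you should note (as the paper does) that uniqueness is not established by a verification argument and must be imported from \citet{dubra2006correction}.
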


Before describing the proof of this result, a few comments are in order regarding the existence of a solution $\underline v$ to \eqref{eq:L_def} and the tie-breaking rule.

\paraheader{Existence of solution to \eqref{eq:L_def}.} The function $g(v) = \mathbb{E}[\tilde{v}| \tilde{v} < v]$ is well-defined for all $v$ in the support of $F_v$ and is continuous and strictly increasing in $v$ (since the conditional expectation increases as we condition on larger values). Moreover, by the law of total expectation, we have $\lim_{v \to \inf \text{supp}(F_v)} g(v) = \inf \text{supp}(F_v)$ (when the infimum is finite) and $\lim_{v \to \sup \text{supp}(F_v)} g(v) = \mathbb{E}[\tilde{v}]$ (when the supremum is finite). 

Therefore, a solution $\underline v$ to \eqref{eq:L_def} exists if and only if $L$ lies in the range of $g$, which is the interval $(\inf \text{supp}(F_v), \mathbb{E}[\tilde{v}])$. In particular, if $\inf \text{supp}{F_v}<L < \mathbb{E}[\tilde{v}]$, then there exists a unique solution $\underline v$ such that $L = \mathbb{E}[\tilde{v}| \tilde{v} < \underline v]$.

The economic intuition behind this condition is as follows. The threshold $\underline v$ represents the cutoff value such that when Bob observes $v = \underline v$, his optimal bid (the conditional expectation $\mathbb{E}[\tilde{v}| \tilde{v} < \underline v]$) exactly equals the reserve price $L$. For this to be possible, the reserve price must be between the lowest possible value in the support and the unconditional expectation $\mathbb{E}[\tilde{v}]$ (which is what Bob would bid if he observed the highest possible value) . If $L$ is too low, then even when Bob observes very low values, his optimal bid would exceed $L$, so the reserve price never binds. If $L$ is too high (above the unconditional expectation), then the reserve price is so restrictive that Alice would not bid and Bob wins at $L$ whenever his observed value is above the threshold. Our condition therefore ensures a nontrivial auction. 

\paraheader{Tie-breaking rule.} The tie-breaking rule we employ serves two important purposes. First, it ensures that Bob's best response is well-defined: when Bob observes a value $v$ such that his optimal bid would tie with Alice's bid, which is possible with strictly positive probability due to the presence of atoms in the bidding functions, the rule that Bob wins eliminates the need for Bob to bid an infinitesimally higher amount. This avoids technical complications.

Second, and more importantly, the tie-breaking rule is crucial for ensuring that Alice's equilibrium strategy yields zero expected payoff. Specifically, when Alice bids at or below the limit price $L$, she ties with the limit price and loses (the good stays unsold). This prevents Alice from having a safe strategy (bid $L$ and win with positive probability) that yields a positive expected payoff.

We note that alternative tie-breaking rules (e.g., random tie-breaking or Alice wins ties) would not qualitatively change our results, but would complicate the analysis. In particular, if ties were broken randomly, Bob would need to account for the probability of winning conditional on tying, which would shift his bidding function but preserve the essential structure of the equilibrium. The key insight that Alice earns zero expected profit and Bob's informational advantage translates into positive expected profit would remain unchanged.
\bigskip

If there are multiple agents bidding in period $0$, i.e., prior to the value being known, the equilibrium bidding strategy for each agent would be any strategy such that the distribution of the maximum bid is the same as that of Alice above.
\begin{corollary}\label{corollary:multiple_agents}
  Consider the case where there are $n$ bidders, $n-1$ of which bid in period $1$ and the remaining bidder bids in period $2$. Then the equilibrium bidding strategy for each agent in period $1$ is any tuple of strategies such that the distribution of the maximum bid over all $n-1$ period 1 bidders is the same as that of Alice above, while the period 2 bidder's strategy is the same as that of Bob above.
\end{corollary}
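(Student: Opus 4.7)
The plan is to reduce the $n$-bidder case to the two-bidder case by exploiting two structural features: (i) the period-2 bidder's best response depends only on the distribution of the highest bid he competes against, and (ii) each period-1 bidder's expected profit factorizes cleanly because period-1 bids are placed before $\tilde{v}$ is realized.

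First I would verify that the period-2 bidder's strategy $\beta_L$ remains optimal. His realized payoff from any bid $b$ depends on the period-1 bids only through the maximum $\max_{j} b_j$, since he wins if and only if $b$ exceeds this maximum, subject to the limit price $L$ and the tie-breaking rules of the baseline model. By hypothesis, the distribution of $\max_{j} b_j$ is identical to the distribution of Alice's single bid in Theorem \ref{theorem:equilibrium}. Hence he faces the identical decision problem and $\beta_L$ from \eqref{eq:bob_bid} is still a best response.

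Next I would verify best-responses for each period-1 bidder $i$. Since all period-1 bids are placed with only the prior information about $\tilde{v}$, the event $\{b > \max_{j \neq i} b_j\}$ is independent of $\tilde{v}$. Bidder $i$'s expected profit from a bid $b \geq L$ therefore factorizes as
\begin{equation*}
  \Pi_i(b) \;=\; \Pr\bigl(b > \max_{j \neq i} b_j\bigr) \cdot \mathbb{E}\bigl[(\tilde{v} - b)\,\mathbb{I}\{b > \beta_L(\tilde{v})\}\bigr].
\end{equation*}
The second factor is precisely Alice's expected profit from a deviation to $b$ in the two-bidder equilibrium. By the indifference property underlying her mixed strategy in Theorem \ref{theorem:equilibrium}, this quantity vanishes on the support of Alice's bid distribution and is non-positive elsewhere (bids above $\mathbb{E}[\tilde{v}]$ win with high probability but overpay, while bids in $[0,L)$ never win). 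Hence $\Pi_i(b) \leq 0$ for all $b$, with equality on Alice's support, so bidder $i$ has no strictly profitable deviation and any bid in that support is optimal.

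The main subtlety lies in handling the atoms of Alice's bid distribution at $0$ and at $L$: the tie-breaking rules among multiple period-1 bidders at these atoms affect the individual winning probabilities $\Pr(b > \max_{j \neq i} b_j)$, but leave the factorization intact, and since the second factor is already zero on Alice's support the precise value of this probability does not matter for optimality. What does require a brief verification is that tuples satisfying the joint constraint actually exist—but this is easy, for instance by letting one designated period-1 bidder play Alice's strategy while the others bid $0$, or by symmetrizing across the $n-1$ bidders.
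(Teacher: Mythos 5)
Your proof is correct and follows essentially the same route as the paper's: the period-2 bidder's problem depends only on the (unchanged) distribution of the maximum period-1 bid, and each period-1 bidder inherits the zero-profit indifference of Alice from the two-bidder equilibrium. You are somewhat more explicit than the paper in writing out the factorization of a period-1 bidder's payoff and in checking that deviations off the support are unprofitable, but the substance is the same.
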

\begin{proof}
    Note that since the distribution of the max of the $n-1$ ``Alice'' bidders is the same as in the case of two bidders, ``Bob's'' best response stays the same. Further, as we show below in the Proof of Theorem \ref{theorem:equilibrium}, Alice makes a profit of 0 over all bids in her support, so any mixed strategy profile for all $n-1$ period 1 bidders that retains the same distribution of max also gives each bidder a payoff of 0 for every bid in the support, and is therefore a best-response.  
\end{proof}
Note that the case with multiple agents bidding in period $2$ is uninteresting since each of them will bid exactly the observed value $v$, while all agents bidding in period $1$ will now bid $0$.

\subsubsection{Proof of Theorem \ref{theorem:equilibrium}}

In this section we prove Theorem \ref{theorem:equilibrium}. The proof is instructive but can be skipped without loss of continuity on a first reading.

The first Lemma essentially tells us that there are no pure strategy equilibria:
\begin{lemma}\label{lemma:no_pure_strategies}
    There is no equilibrium in this model in pure strategies for Alice.
 \end{lemma}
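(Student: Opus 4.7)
The plan is to proceed by contradiction: suppose Alice plays some pure bid $b_a \ge 0$ in equilibrium, derive Bob's best response, and exhibit a strictly profitable deviation for Alice. The natural case split is $b_a \le L$ versus $b_a > L$, because Bob's best response has qualitatively different structure in the two regimes.

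For $b_a \le L$, Alice's bid either fails to meet the reserve or ties with it alone, so by the stated tie-breaking convention Alice never wins and her equilibrium payoff is zero. Bob, facing no effective competition from Alice, will best-respond by bidding $L$ on $\{v \ge L\}$ (winning outright against the reserve) and anything strictly below $L$ otherwise. Under the maintained assumption that \eqref{eq:L_def} has a solution in the support of $F_v$---equivalently, $L < \E[\tilde v]$---Alice can pick a deviation $b_a' \in (L, \E[\tilde v])$; against Bob's strategy this wins for every realization of $v$ and delivers expected payoff $\E[\tilde v] - b_a' > 0$, contradicting equilibrium.

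For $b_a > L$, Bob's best response is to bid exactly $b_a$ on $\{v \ge b_a\}$ (winning the tie at minimum cost and earning $v - b_a \ge 0$) and to bid strictly below $b_a$ on $\{v < b_a\}$ (where winning would lose money). Alice therefore wins precisely on $\{\tilde v < b_a\}$, an event of strictly positive probability since $b_a > 0$ and $F_v$ has support $\mathbb{R}_+$, and on this event her conditional payoff is $\E[\tilde v \mid \tilde v < b_a] - b_a < 0$. Her unconditional payoff is thus strictly negative, so she strictly improves by dropping her bid to anything below $L$, which yields zero.

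The main obstacle I anticipate is bookkeeping rather than substance: the tie-breaking rules at the three boundary events (Alice vs.\ Bob, Bob vs.\ $L$, Alice vs.\ $L$) each have to be invoked correctly, and Bob's best response is not unique on his ``losing'' branch, so the deviation arguments should be phrased so that they do not depend on the particular selection. A minor side point is the equivalence between the hypothesis that \eqref{eq:L_def} has a solution in the support and the inequality $L < \E[\tilde v]$ that actually powers the $b_a \le L$ case.
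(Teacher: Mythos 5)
Your proof is correct and follows essentially the same route as the paper's: fix Alice's pure bid, write down Bob's best response, and exhibit a strictly profitable deviation for Alice. Your case split ($b_a \le L$ versus $b_a > L$) is a minor refinement of the paper's ($b_a > 0$ versus $b_a = 0$) that handles the intermediate bids $0 < b_a \le L$ more cleanly---there Alice's payoff is zero rather than negative, so the upward deviation to some $b_a' \in (L, \E[\tilde v])$ is the one that bites, exactly as you argue.
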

 \begin{proof}
     Suppose Alice has an equilibrium bid in pure strategies. Then Bob knows Alice's bid $b_a >0$. Bob can then bid $\max (b_a,L)$ whenever $v > \max (b_a,L)$ and $0$ otherwise. In this case Alice's expected payoff is negative, and simply bidding 0 would be strictly better. However, Alice's equilibrium bid cannot be $0$ either, since Bob will optimally respond with a bid of $L$ whenever $v > \max (b_a,L)$, and Alice's best response to Bob's bid of $L$ is not $0$ given our assumption that $\mathbb{E}_{\tilde{v}}[\tilde{v}]> L$.
 \end{proof}

 Next, we argue that in any equilibrium in this model, Alice makes a payoff of $0$.

 \begin{lemma}\label{lemma:alice_zero_payoff}
     In any equilibrium, Alice makes 0 expected payoff for all bids that she submits.
 \end{lemma}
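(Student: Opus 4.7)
The plan is to combine Alice's mixed-strategy indifference with a winner's-curse bound at the bottom of her bid support. By Lemma~\ref{lemma:no_pure_strategies}, Alice randomizes over a non-degenerate support $\Sigma$, and the standard indifference principle for mixed-strategy equilibria implies her equilibrium expected payoff $\pi_a(b)$ takes a common value $\pi$ for every $b \in \Sigma$. Alice can always deviate to bidding $0$ for payoff $0$, so $\pi \geq 0$, and the substantive task is to establish $\pi \leq 0$.

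To that end, I would let $b_* \coloneqq \inf \Sigma$ and split into cases. If $\Sigma$ contains a bid in $[0, L]$, that bid yields payoff $0$ (either the reserve fails, or by the tie-breaking rule the good stays unsold when only Alice ties $L$), so $\pi = 0$ by indifference. Otherwise $\Sigma \subset (L, \infty)$, and I would then argue that $G_a(L) = 0$, so Bob has no profitable bid strictly inside the gap $(L, b_*)$: any such bid loses for sure against Alice while a bid of $L$ yields expected profit $(v - L) G_a(L) = 0$. Consequently Bob's equilibrium bid lies in $\Sigma$ when he participates and at most $L$ otherwise, with participation threshold $v_* = b_*$ (since a strictly positive profit from any bid in $\Sigma$ requires $\tilde v > b_*$). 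Alice bidding at or just above $b_*$ then wins only on the event $\{\tilde v < b_*\}$, with payoff $(\mathbb{E}[\tilde v \mid \tilde v < b_*] - b_*) F_v(b_*) < 0$ since $\mathbb{E}[\tilde v \mid \tilde v < b_*] < b_*$ (the degenerate case $F_v(b_*) = 0$ gives $\pi_a(b_*) = 0$ trivially). This contradicts $\pi \geq 0$, so this case cannot arise, and we conclude $\pi = 0$.

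The main obstacle will be rigorously handling Bob's best response in the second case without invoking the characterization of $\beta$ from Theorem~\ref{theorem:equilibrium} itself, including the treatment of possible atoms in $G_a$ and pinning down that the participation threshold is exactly $v_* = b_*$. An alternative, more computational route, should the direct argument prove cumbersome, is to combine Alice's first-order indifference condition on $\Sigma$ with Bob's first-order best-response condition to derive an ODE for Bob's bid function $\beta$, integrate it using the boundary condition $\beta(\underline v) = L$ from~\eqref{eq:L_def} to obtain $\beta(v) = \mathbb{E}[\tilde v \mid \tilde v < v]$, and then verify $\pi_a(b) = 0$ by direct substitution into Alice's payoff integral.
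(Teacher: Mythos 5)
Your proposal is correct and takes essentially the same route as the paper: establish indifference over the support of Alice's randomization, then show by a winner's-curse argument at $\inf \Sigma$ that the support cannot lie entirely above $L$, so it contains a bid that never wins and hence earns $0$. The paper's proof leaves the same details about Bob's best response near the infimum informal that you explicitly flag as the remaining obstacle, so no further comparison is needed.
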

 \begin{proof}
         Lemma \ref{lemma:no_pure_strategies} tells us that the equilibrium must involve Alice randomizing.

         Next, note that usual equilibrium arguments tell us that Alice must be indifferent over all bids in the support of her randomization.

         Let $B_A \subseteq \mathbb{R}_+$ be the support of bids Alice makes. Note that $\inf B_A \leq L$, i.e. Alice makes bids that are guaranteed to not win sometimes.

         To see why, suppose not, i.e., suppose $\inf B_A = \underline{b}> L$. Therefore Bob knows that any bid Alice makes is at least as large as $\underline{b}$. It follows that Bob should either not bid (or bid $0$) whenever value $v_t \leq \underline{b}$, or bid more than $\underline{b}$ whenever value $v_t > \underline{b}$. Therefore bids arbitrarily close to $\underline{b}$ in $B_A$ by Alice must have strictly negative expected value, a contradiction.

         However note that she makes $0$ expected payoff from any bid $b$ such that $b = \inf B_A$ (due to the tie-breaking rule).%
         \footnote{Note that a different tie-breaking rule would not change our results qualitatively, but would make defining Bob's best response more complicated.} Since Alice is indifferent over all bids in the support of her randomization, it must be that she makes $0$ expected payoff from any bid in $B_A$.
 \end{proof}

With these preliminaries out of the way, we now show that the strategies described above for Alice and Bob form an equilibrium. Uniqueness follows from the result of \citet{dubra2006correction}.

\begin{lemma}\label{lemma:alice_bids}
  Suppose Bob's bid as a function of the value $v$ is given by a non-decreasing function $\beta_L(v)$.  Therefore, we must have, for any bid of Alice  $b_A > L$,
  \[
    b_A = \E\left[  v | v < \beta_L^{-1}(b_A) \right].
  \]
\end{lemma}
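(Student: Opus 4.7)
The plan is to apply the zero-profit condition from Lemma~\ref{lemma:alice_zero_payoff} directly, after writing out Alice's expected payoff against Bob's strategy $\beta_L$. Fix $b_A > L$ in the support of Alice's mixed strategy. Because $\beta_L$ is non-decreasing and Bob wins ties by assumption, Alice wins with bid $b_A$ exactly on the event $\{\beta_L(\tilde v) < b_A\}$, which (treating $\beta_L^{-1}$ as the left-continuous generalized inverse) coincides up to an $F_v$-null set with $\{\tilde v < \beta_L^{-1}(b_A)\}$. Conditional on winning she receives $\tilde v$ and pays $b_A$, so her expected profit factors as
\[
\Pi_A(b_A) \;=\; \Pr\!\left( \tilde v < \beta_L^{-1}(b_A) \right) \cdot \Bigl( \E\!\left[ \tilde v \,\big|\, \tilde v < \beta_L^{-1}(b_A) \right] - b_A \Bigr).
\]

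Next, I would invoke Lemma~\ref{lemma:alice_zero_payoff} to set $\Pi_A(b_A) = 0$. The probability factor is strictly positive: since $b_A > L$ and Bob's schedule satisfies $\beta_L(v) = 0 \le L < b_A$ whenever $\tilde v \leq L$, the threshold $\beta_L^{-1}(b_A)$ strictly exceeds $L$, so the conditioning event carries positive $F_v$-mass. Dividing by this positive probability collapses the equation to the desired identity $b_A = \E[\tilde v \mid \tilde v < \beta_L^{-1}(b_A)]$.

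The one subtlety I anticipate, rather than a true obstacle, concerns the interpretation of $\beta_L^{-1}$: the hypothesis only gives monotonicity, not strict monotonicity, so $\beta_L$ could in principle have a flat piece at height $b_A$. However, if such a flat piece carried positive $F_v$-mass, then a bid of $b_A + \varepsilon$ would capture those additional wins at essentially no extra cost, strictly improving on $b_A$ and contradicting Alice's indifference across the support. Hence $b_A$ is either a point of strict increase of $\beta_L$ or lies strictly above every flat segment, and the generalized inverse is unambiguous at $b_A$; everything else is the routine zero-profit calculation displayed above.
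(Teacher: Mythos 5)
Your proof is correct and follows essentially the same route as the paper's: write Alice's expected payoff against Bob's non-decreasing schedule, rewrite the winning event via the pseudo-inverse $\beta_L^{-1}(b) = \inf\{v : \beta_L(v) \geq b\}$, invoke the zero-profit condition of Lemma~\ref{lemma:alice_zero_payoff}, and divide by the positive probability. Your extra discussion of flat pieces is harmless but not needed—with the left-continuous generalized inverse the events $\{\beta_L(\tilde v) < b_A\}$ and $\{\tilde v < \beta_L^{-1}(b_A)\}$ already agree up to an $F_v$-null set since $F_v$ has a density.
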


Let $\underline{v}$ be such that $L = \E[v|v\leq
  \underline{v}]$. In light of this Lemma, by a straightforward change of variables, we can
  observe that for any $ v > \underline{v}$,
  $\beta_L(v) =\E[\tilde v| \tilde v \leq v]$.
  If $vleq L$, then trivially Bob will not want to win the auction (or will be indifferent), so
  $\beta_L(v) = 0$ if $v \leq L$.
 For $v \in (L,\underline{v})$,  it's still profitable to bid $L$, so let's conjecture that $\beta_L(v) = L$. Our conjectured bidding strategy therefore is
\[
    \beta_L(v) =
    \begin{cases}
      \E[\tilde v| \tilde v < v] & \text{if $v \geq \underline v$}, \\
      L & \text{if $L < v < \underline v$}, \\
      0 & \text{if $v \leq L$}.
    \end{cases}
\]

Now let's recall our proposed equilibrium strategy for Alice: Alice randomly draws a value $\tilde{v}$ according to $F_v$ and bids $\mathbb{E}[v| v < \tilde{v}]$ whenever this is larger than $L$ and $0$ otherwise. Note that given this strategy by definition, Alice gets an expected profit of $0$ regardless of her bid whenever Bob's bid is according to $\beta_L(\cdot)$, as was shown in Lemma \ref{lemma:alice_bids}.

To conclude the proof therefore we have the following Lemma:
\begin{lemma}\label{lemma:bob_bids}
  Bob's best response to Alice's strategy as conjectured in the statement of the theorem is given by $\beta_L(\cdot)$ as conjectured above in \eqref{eq:bob_bid}.
\end{lemma}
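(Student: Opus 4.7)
The plan is to determine Bob's best response by analyzing his expected profit as a function of his bid $b$, given that Alice randomizes by drawing $v' \sim F_v$ and bidding $\beta_L(v')$, and then splitting on the realized value $v$ that Bob observes.

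First, I would characterize the distribution of Alice's bid $b_A = \beta_L(v')$: there is an atom of mass $F_v(L)$ at zero, an atom of mass $F_v(\underline v) - F_v(L)$ at $L$, and continuous mass on $(L, \infty)$ generated by the strictly increasing map $\beta_L$ on $[\underline v, \infty)$, whose monotonicity follows from differentiating the identity $\beta_L(v) F_v(v) = \int_0^v u\,f_v(u)\,du$. For any candidate bid $b \geq L$ by Bob, the tie-breaking convention makes $\Pr(b_A \leq b)$ equal to $F_v(\underline v)$ when $b = L$ and to $F_v(\beta_L^{-1}(b))$ when $b > L$. Any bid in $(0, L)$ never wins, so Bob's relevant choices reduce to $b = 0$ or some $b \geq L$.

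Next, using the substitution $b = \beta_L(v'')$ for $v'' \geq \underline v$, Bob's expected profit becomes $\pi(v'') = (v - \beta_L(v''))\,F_v(v'')$. Differentiating the identity above gives $\beta_L'(v'')\,F_v(v'') = (v'' - \beta_L(v''))\,f_v(v'')$, and substituting this into $\pi'$ collapses it to the clean expression $\pi'(v'') = (v - v'')\,f_v(v'')$. Hence $\pi$ is strictly single-peaked at $v'' = v$, which immediately yields all three cases: if $v \geq \underline v$, the interior optimum $v'' = v$ delivers $b^{\ast} = \E[\tilde v \mid \tilde v < v]$; if $L < v < \underline v$, the constraint $v'' \geq \underline v$ binds and the boundary $v'' = \underline v$ delivers $b^{\ast} = L$, with $(v - L) F_v(\underline v) > 0$ strictly dominating the outside option of not bidding; and if $v \leq L$, every feasible bid yields non-positive profit so $b^{\ast} = 0$ is weakly optimal. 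These three branches reproduce $\beta_L$ exactly as in \eqref{eq:bob_bid}.

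The main subtlety is the atom at $L$ in Alice's bid distribution, which creates a jump in Bob's winning probability and a kink in his payoff function. The parametrization $b = \beta_L(v'')$ on $v'' \geq \underline v$ is what absorbs this cleanly: the boundary point $v'' = \underline v$ corresponds to Bob bidding exactly $L$, and by the tie-breaking convention he then captures the full atom, so the supremum on the positive-bid branch is attained. This is precisely what makes the case $L < v < \underline v$ have a genuine maximizer rather than an unachieved supremum approached from above.
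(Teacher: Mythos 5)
Your proof is correct, but it runs in the opposite direction from the paper's. The paper's proof of Lemma \ref{lemma:bob_bids} works inversely: starting from Bob's conjectured bid function, it writes the first-order condition $(v-\beta_0(v))f_a(\beta_0(v)) - F_a(\beta_0(v)) = 0$ for an \emph{unknown} bid distribution $F_a$ of Alice, converts it into an ODE for $H(v) = F_a(\beta_0(v))$, solves it to obtain $H(v) = F_v(v)$, and observes that this is exactly the distribution induced by Alice's conjectured mixed strategy; the reserve-price case is then dispatched with a short verbal argument. You instead fix Alice's conjectured strategy, compute the induced bid distribution (including the atoms at $0$ and at $L$), parametrize Bob's bid as $b = \beta_L(v'')$, and show $\pi'(v'') = (v - v'')f_v(v'')$, so that the profit is single-peaked at $v'' = v$. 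Both arguments hinge on the same identity $\beta_L'(v)F_v(v) = (v-\beta_L(v))f_v(v)$, but your direction buys two things the paper leaves implicit: global (not merely first-order) optimality of Bob's bid, and a careful treatment of the case $L < v < \underline v$, where the atom of Alice's bids at $L$ combined with the tie-breaking rule is what makes $b = L$ an attained maximizer rather than an unattained supremum. The only loose end is bids above $\sup_{v''}\beta_L(v'') = \E[v]$, which fall outside your parametrization; they are trivially dominated since a bid of $\E[v]$ already wins with probability one at a lower price, but a sentence to that effect would fully close the argument.
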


\subsection{Equilibrium payoffs}
As we already argued above, Alice gets an expected payoff of $0$ from any bid that she submits. In some ways, this is intuitive, since she has no information about the value and therefore cannot gain any information rents. Further, given the description of the equilibrium bidding strategies, both Alice and Bob each win the good with ex-ante probability $1/2$--- i.e., even though Bob has an informational advantage, this does not translate into a higher probability of winning. On the other hand the informational advantage does translate into a higher expected payoff for Bob:
\begin{theorem}\label{theorem:bob_profit_0}
    For the case that $L=0$, Bob's expected profit $\Pi_B$ is given by:
    \begin{equation}\label{eq:bob-l-zero}
      \Pi_B = \E[\max(v_1 - v_2,0)],
    \end{equation}
    where $v_1, v_2$ are i.i.d.\ draws from $F_v$.
\end{theorem}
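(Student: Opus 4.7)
The plan is to specialize the equilibrium characterization of Theorem \ref{theorem:equilibrium} to $L=0$ and then do a short direct computation of Bob's payoff. First, I would observe that when $L=0$ the defining equation $L = \E[\tilde v \mid \tilde v < \underline v]$ is satisfied by $\underline v = 0$ (the lower end of the support of $F_v$), so the bidding strategies collapse to the clean form
\[
\beta_0(v) = \E[\tilde v \mid \tilde v < v] \quad \text{for all } v \ge 0,
\]
with Alice drawing $v' \sim F_v$ and bidding $\beta_0(v')$. A key structural fact I would record is that $\beta_0$ is (weakly) increasing in $v$, so conditional on Bob observing value $v$ and Alice having drawn $v'$, Bob wins if and only if $v \ge v'$ (equality is a measure-zero event, and the tie-breaking rule gives it to Bob in any case).

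Next I would write Bob's expected profit as an iterated expectation over the independent draws $v$ (the realized value) and $v'$ (Alice's draw),
\[
\Pi_B = \E\!\left[\bigl(v - \beta_0(v)\bigr)\,\I{v' < v}\right].
\]
Conditioning on $v$ first and using $\Pr(v' < v) = F_v(v)$ together with
\[
\beta_0(v)\,F_v(v) = \E[\tilde v \mid \tilde v < v]\,F_v(v) = \E\!\left[v'\,\I{v' < v}\right],
\]
the integrand becomes $v F_v(v) - \E[v'\,\I{v' < v} \mid v] = \E[(v - v')\,\I{v'<v} \mid v]$. Taking the outer expectation then gives
\[
\Pi_B = \E\!\left[(v - v')\,\I{v' < v}\right] = \E[\max(v - v', 0)],
\]
where $v, v'$ are i.i.d.\ from $F_v$, which is \eqref{eq:bob-l-zero} (relabeling $(v,v') = (v_1,v_2)$).

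There is no real obstacle here: uniqueness and well-posedness of the equilibrium strategies are already handled by Theorem \ref{theorem:equilibrium}, monotonicity of $\beta_0$ is immediate from monotonicity of $v \mapsto \E[\tilde v \mid \tilde v < v]$, and the remaining content is a one-line application of the tower property. The only thing worth being careful about is the edge behavior at $v=0$ (and measure-zero ties), which the tie-breaking convention handles cleanly. I would conclude by remarking that this identifies $\Pi_B$ with the expected payoff of a buyer who observes two independent draws and takes the better deal, foreshadowing the options-portfolio interpretation developed in the subsequent sections.
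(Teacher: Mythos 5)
Your proof is correct, and it starts from the same place as the paper's: both express Bob's profit as $\E\left[(v-\beta_0(v))\,F_v(v)\right]$, using the fact that conditional on observing $v$ Bob wins with probability $F_v(v)$ (the paper derives this as $H(v)=F_a(\beta_0(v))=F_v(v)$ in the proof of Theorem \ref{theorem:equilibrium}; you get it from monotonicity of $\beta_0$ plus the tie-breaking rule). Where you diverge is in how you reduce this to $\E[\max(v_1-v_2,0)]$. The paper substitutes the integration-by-parts form $\beta_0(v)=v-\frac{\int_0^v F_v(x)\,dx}{F_v(v)}$, integrates by parts a second time to reach $\int_0^\infty F_v - \int_0^\infty F_v^2$, and then recognizes $F_v^2$ as the CDF of $\max(v_1,v_2)$. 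You instead use the one-line probabilistic identity $\beta_0(v)F_v(v)=\E\left[v'\,\I{v'<v}\right]$ for an independent copy $v'$, which turns the integrand directly into $\E\left[(v-v')\I{v'<v}\mid v\right]$ and gives the exchange-option form by the tower property. Your route is shorter, avoids the analytic machinery entirely, and makes the ``two independent draws'' structure of the answer transparent from the outset; the paper's computation has the side benefit of producing the intermediate representation $\Pi_B=\int_0^\infty F_v(x)(1-F_v(x))\,dx$, which is the form that gets reused and generalized in the proof of Theorem \ref{theorem:bob_profit_L} for $L>0$. Your handling of the edge cases ($\underline v=0$ when $L=0$, measure-zero ties) is appropriately careful and consistent with the paper's conventions.
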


\newedit{ In the following corollary, we observe that \Cref{theorem:bob_profit_0} can be
  interpreted as the price of an \emph{exchange option}. In particular, imagine two assets with
  values $v_1$ and $v_2$ independently distributed according to $F_v$. An option to exchange the
  second asset for the first asset has payoff $\max(v_1 - v_2, 0)$, which corresponds
  to Bob's
  expected profit in \eqref{eq:bob-l-zero}.}

\begin{corollary}\label{corr:exchange_option}
  Bob's expected payoff $\Pi_B$ when $L=0$ in \eqref{eq:bob-l-zero} is the payoff of an ``exchange
  option'' \citep{margrabe1978value}, i.e., the option to exchange one asset for another, in the
  case where both assets have i.i.d.\ values from the same underlying distribution $F_v$. This
  option is worth more than either a call or put option with a strike price that is at-the-money.
\newedit{Such call option would be an option to buy an asset with value $v$ at price $\E[v]$,
  i.e., a payoff of $\max(v - \mathbb{E}[v], 0)$. Similarly, a put option would be an option to sell
  an asset with value $v$ for a price $\E[v]$, i.e., a payoff of $\max(\mathbb{E}[v] - v, 0)$.}
  Then,
  \[
    \Pi_B \geq  \E\left[  \max(v - \mathbb{E}[v], 0) \right],\quad
    \Pi_B \geq  \E\left[  \max(\mathbb{E}[v] - v, 0) \right].\quad
  \]
\end{corollary}
\begin{proof}
  Note that, by the tower property of expectation and Jensen's inequality,
  \[
    \begin{split}
      \Pi_B & = \mathbb{E}[\max(v_1 - v_2, 0)]
      =  \mathbb{E}\left[ \mathbb{E}[ \max(v_1 - v_2, 0) | v_2 ] \right] \\
      & \geq \mathbb{E}\left[  \max(v_1 - \mathbb{E}[v_2], 0) \right]
      = \mathbb{E}\left[  \max(v - \mathbb{E}[v], 0) \right], \\
    \end{split}
  \]
  so the exchange option is worth more than an at-the-money call. The same argument can establish
  that it's also worth more than an at-the-money put.
\end{proof}

\begin{theorem}\label{theorem:bob_profit_L}
  In the general case in which $L>0$. In this case, Bob's expected profit $\Pi_B$ is given by
  \[
    \begin{split}
      \Pi_B
      & = \E\left[\max\left(v_1 \I{v_1 \geq \underline{v}} - v_2 \I{v_2 \geq \underline{v}},
        0\right)\right]
        + \E\left[(v_1 - L)\I{L < v_1 < \underline{v}} \I{v_2 \geq \underline{v}}\right]
        - \E[ v_1 \I{v_2 > \underline{v}}]
      \\
      & = \E\left[\max\left(v_1 \I{v_1 \geq \underline{v}} - v_2 \I{v_2 \geq \underline{v}},
        0\right)\right]
        + F_v(\underline{v}) \E\left[(v - L)\I{L < v < \underline{v}}\right]
        - (1-F_v(\underline{v})) \E[ v],
    \end{split}
  \]
  where $\underline v$ is the value such that $L= \mathbb{E}[\tilde{v}| \tilde{v} < \underline v]$.
\end{theorem}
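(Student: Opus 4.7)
My plan is to compute $\Pi_B$ directly from the equilibrium strategies in Theorem~\ref{theorem:equilibrium} by conditioning on Bob's observed value $v$, and then to rewrite the resulting integrals in the stated portfolio-of-options form by introducing an independent copy of $\tilde v$ via the tower property.

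First, using $\beta_L$ from \eqref{eq:bob_bid} together with Alice's mixed strategy (she draws $v' \sim F_v$ and bids $\beta_L(v')$), I would split the analysis into three cases on $v$. When $v \leq L$, Bob bids $0$ and earns nothing. When $L < v < \underline v$, Bob bids $L$; since Bob wins all ties and Alice's bid is at most $L$ iff $v' < \underline v$, Bob wins with probability $F_v(\underline v)$ for a conditional profit of $(v - L) F_v(\underline v)$. When $v \geq \underline v$, Bob bids $\E[\tilde v \mid \tilde v < v]$, and because $\beta_L$ is non-decreasing Bob's bid dominates Alice's iff $v' \leq v$, giving a win probability of $F_v(v)$ and a conditional profit of $v F_v(v) - \int_0^v w f_v(w)\, dw$. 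Integrating over $v \sim F_v$ yields
\begin{equation*}
\Pi_B \;=\; F_v(\underline v)\,\E\bigl[(v - L)\I{L < v < \underline v}\bigr] \;+\; \int_{\underline v}^{\infty} \!\!\left(v F_v(v) - \int_0^v w f_v(w)\, dw\right) f_v(v)\, dv.
\end{equation*}
The first term already matches the corresponding piece in the theorem.

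Next I would bring in an independent copy $v_2 \sim F_v$ and use the tower property to rewrite the remaining integral. Since $v F_v(v) = \E[v_1 \I{v_2 \leq v_1} \mid v_1 = v]$ and $\int_0^v w f_v(w)\, dw = \E[v_2 \I{v_2 \leq v_1} \mid v_1 = v]$, the integral reduces to the truncated exchange-option expectation $\E\bigl[(v_1 - v_2)^+ \I{v_1 \geq \underline v}\bigr]$. To put this into the symmetric max form appearing in the theorem, I would partition on whether $v_2 \geq \underline v$ or $v_2 < \underline v$: on $\{v_2 \geq \underline v\}$ the integrand equals $\max\bigl(v_1 \I{v_1 \geq \underline v} - v_2 \I{v_2 \geq \underline v}, 0\bigr)$ directly, while on $\{v_2 < \underline v\}$ the inequality $v_1 \geq \underline v > v_2$ forces $(v_1 - v_2)^+ = v_1 - v_2$, whereas the symmetric max expression only contributes $v_1$.

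The main obstacle is handling this truncation bookkeeping cleanly so that the leftover pieces from the second partition regroup into the two correction terms stated in the theorem. The key algebraic tool is the defining identity $L\,F_v(\underline v) = \int_0^{\underline v} w f_v(w)\, dw = \E[v\,\I{v < \underline v}]$, which lets me trade expectations of $v$ truncated below $\underline v$ for multiples of $L$ and conversely. Once the first (unfactored) expression is established, the second equality in the theorem follows immediately by independence of $v_1$ and $v_2$: $\E\bigl[(v_1 - L)\I{L < v_1 < \underline v}\I{v_2 \geq \underline v}\bigr]$ and $\E[v_1 \I{v_2 > \underline v}]$ factor into products of marginals, yielding $F_v(\underline v)\,\E[(v-L)\I{L<v<\underline v}]$ and $(1 - F_v(\underline v))\,\E[v]$ respectively.
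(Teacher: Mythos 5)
Your core computation follows essentially the same route as the paper's proof: you write Bob's profit by conditioning on his observed value, identify his win probability against Alice's mixed bid as $F_v(\underline v)$ on $(L,\underline v)$ and $F_v(v)$ on $[\underline v,\infty)$, and obtain the same intermediate expression $\Pi_B = F_v(\underline v)\,\E[(v-L)\I{L<v<\underline v}] + \int_{\underline v}^\infty\bigl(v F_v(v)-\int_0^v w f_v(w)\,dw\bigr)f_v(v)\,dv$. Where you differ is only in how the last integral is converted into the exchange-option form: the paper does two rounds of integration by parts and recognizes $F_v^2$ as the CDF of $\max(v_1,v_2)$, while you introduce the independent copy directly and partition on $\{v_2\ge\underline v\}$ versus $\{v_2<\underline v\}$. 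Your route is arguably cleaner and makes the probabilistic structure more transparent; both are valid.

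The one genuine problem is in your final reconciliation step. You assert that $\E\bigl[(v_1-L)\I{L<v_1<\underline v}\I{v_2\ge\underline v}\bigr]$ factors into $F_v(\underline v)\,\E[(v-L)\I{L<v<\underline v}]$; since $\mathbb{P}(v_2\ge\underline v)=1-F_v(\underline v)$, that identity is false as written. If you carry out your own partition honestly, the leftover from the cell $\{v_1\ge\underline v,\ v_2<\underline v\}$ is $-(1-F_v(\underline v))\,\E[v\I{v<\underline v}] = -(1-F_v(\underline v))\,L\,F_v(\underline v)$, not $-(1-F_v(\underline v))\,\E[v]$. This is in fact exactly what the paper's own proof derives in its final display, and it is what passes the sanity check against Theorem~\ref{theorem:bob_profit_0} as $L\to 0$ (the term $-(1-F_v(\underline v))\E[v]$ would not vanish in that limit). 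So the substance of your argument is right and lands where the paper's proof lands; the error is that you forced your bookkeeping to match the displayed statement of the theorem, which is itself inconsistent with the paper's proof (the indicator on $v_2$ in the middle term and the final $\E[v]$ term appear to be typos). State the corrected final term and your proof is complete.
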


Each of the terms in the above theorem has a natural economic interpretation:
\newedit{
The first term is still an exchange option. However the underlying assets on which the
  exchange option is written are asset-or-nothing options on two i.i.d.\ assets with strike price
  $\underline{v}$. An asset-or-nothing option, in turn, has a payoff equal to the value of the
  asset but only if the value exceeds the strike price. In other words, assets with payoffs
  $v_1 \I{v_1 \geq \underline{v}}$ and $v_2 \I{v_2 \geq \underline{v}}$ are the constituents of
  the exchange option.
The second term corresponds to being long $F_v(\underline{v})$ units of a type of range
  option. This range option is a call option with strike price $L$ (i.e., an option to buy at the
  price $L$) that only pays out if, in addition, the value  is less than $\underline{v}$.
The final term can be interpreted as being short $\E[v]$ units of a binary
  call with strike price $\underline{v}$. Such a binary call gives a payoff of $1$, but only if
  the final value of the asset exceeds $\underline{v}$.
}
Note that when $L=0$, the latter terms vanish and the first term corresponds to the payoff of an
exchange option as described in Theorem \ref{theorem:bob_profit_0}.

The revenue of the auctioneer is calculated in the ex-ante sense, i.e., the expected revenue
before the auction is held, expectation taken over the distribution of the value of the item. It
is straightforward given the equilibrium bidding strategies for Alice and Bob to establish that:
\begin{theorem}\label{theorem:revenue}
  The auctioneer's expected revenue is given by
  \[
    \text{Revenue} = \E[\max\{\beta_L(v_1), \beta_L(v_2)\}],
  \]
  where $v_1,v_2$ are i.i.d.\ draws from $F_v$.
\end{theorem}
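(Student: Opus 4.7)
The plan is to directly unpack the definition of revenue under the equilibrium bidding strategies from Theorem~\ref{theorem:equilibrium} and show that the ``unsold'' case contributes zero to the expectation in a way that is already absorbed into the $\max$ expression. First I would identify the joint distribution of the two submitted bids. By Theorem~\ref{theorem:equilibrium}, Bob submits $\beta_L(v)$ where $v \sim F_v$ is the realized value, while Alice's mixed strategy has her draw an auxiliary $v' \sim F_v$ independently of everything else and bid $\beta_L(v')$. Hence the pair of bids has the same distribution as $(\beta_L(v_1), \beta_L(v_2))$ for two i.i.d.\ draws $v_1, v_2 \sim F_v$.

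Next I would handle the reserve price. The auction's revenue equals the highest bid when that bid is at least $L$, and equals $0$ otherwise. The key observation is that the bidding function $\beta_L$ takes values only in $\{0\} \cup [L, \infty)$: from \eqref{eq:bob_bid}, $\beta_L(v) = 0$ when $v \leq L$, $\beta_L(v) = L$ when $L < v < \underline v$, and $\beta_L(v) = \E[\tilde v \mid \tilde v < v] \geq \E[\tilde v \mid \tilde v < \underline v] = L$ when $v \geq \underline v$. Consequently,
\[
  \max\{\beta_L(v_1), \beta_L(v_2)\} = 0
  \quad \iff \quad
  \beta_L(v_1) = \beta_L(v_2) = 0,
\]
and whenever the maximum is positive it is automatically $\geq L$, so the good is sold.

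Putting these together, the good is sold precisely when $\max\{\beta_L(v_1), \beta_L(v_2)\} > 0$, in which case revenue equals that maximum; when the maximum is $0$, revenue is also $0$. Tie-breaking does not affect the amount paid, since a tie between Alice and Bob (possibly at $L$, possibly at a higher common value, which on the continuous part of $\beta_L$ has probability zero anyway) yields the same revenue regardless of who is declared the winner. Therefore the revenue random variable equals $\max\{\beta_L(v_1), \beta_L(v_2)\}$ pointwise, and taking expectations gives the stated formula.

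There is no real obstacle here; the only subtlety to articulate carefully is that the ``no sale'' event $\{\max < L\}$ coincides with $\{\max = 0\}$ because $\beta_L$ never takes values in the open interval $(0, L)$, so collapsing the unsold case into the $\max$ expression does not require a separate indicator term.
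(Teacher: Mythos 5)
Your overall approach is the right one, and in fact the paper offers no written proof of this theorem at all (it is asserted as ``straightforward''), so the direct unpacking you give---identify the joint law of the two submitted bids as $(\beta_L(v_1),\beta_L(v_2))$ with $v_1,v_2$ i.i.d.\ from $F_v$, then argue that the unsold event is absorbed into the $\max$ because $\beta_L$ never takes values in $(0,L)$---is exactly what the authors must have in mind. The first half of your argument is correct.

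The step that does not survive scrutiny is the claim that ``the good is sold precisely when $\max\{\beta_L(v_1),\beta_L(v_2)\}>0$.'' The model's tie-breaking rule says that if \emph{only Alice} ties with the limit price, the good stays unsold. Under the equilibrium strategies, Alice bids exactly $L$ whenever her auxiliary draw lands in $(L,\underline v)$, an event of probability $F_v(\underline v)-F_v(L)>0$, while Bob bids $0$ whenever $v\le L$, an event of probability $F_v(L)>0$ (recall $F_v$ has full support on $\mathbb{R}_+$). On the intersection of these two independent events the maximum bid equals $L>0$, yet by the stated rule the good is not sold and revenue is $0$. So the pointwise identity you assert fails on a positive-probability event, and taken literally the formula overstates revenue by $L\,F_v(L)\,\bigl(F_v(\underline v)-F_v(L)\bigr)$. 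Your tie-breaking remark addresses only ties \emph{between Alice and Bob}, not Alice's tie with the reserve, which is the case that actually matters here. To close the argument you must either adopt the convention that a bid of exactly $L$ clears the reserve regardless of who submits it (in which case everything you wrote goes through verbatim), or else record the correction term; as written, the ``no real obstacle'' claim is too quick at precisely this point.
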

Again this has an option interpretation: the auctioneer's revenue is equivalent to \newedit{an
  option to choose one of two assets, each of which is distributed i.i.d. as $\beta_L(\tilde{v})$, where $\tilde{v}$ is a random variable with distribution $F_v$.}


\subsection{Extension: Uncertain Timing Advantages}\label{sec:uncertain-timing-advantages}

We now consider the case where Bob may not be able to act in time with some probability $\alpha
\in (0,1)$, in which case Alice is the last mover. For simplicity, in this extension we assume
that there is no reserve price, i.e., $L=0$.

Note that the previous equilibrium now no longer persists: in the previous equilibrium, Alice made
exactly $0$ and was indifferent over all her bids, while now she can guarantee herself a payoff of
$\alpha \mathbb{E}[v]$ by simply bidding the reserve price of $L$ and pocketing the difference
whenever Bob does not appear. Nevertheless, repeating the logic of the previous theorem delivers a
characterization of equilibrium.

\begin{theorem}\label{theorem:uncertain-timing}
    In the case where Bob may not be able to act in time with probability $\alpha \in (0,1)$, the equilibrium is given by the following:
    \begin{itemize}
        \item Bob (when he is able to act) bids $\beta(v)$ as a function of the observed value $v$:
        \begin{align*}
            \beta(v) = \begin{cases}
                0 & \text{if } v \leq \bar{v}_\alpha, \\
                \E\left[ \tilde{v} | \tilde{v} < v \right] - \frac{\alpha \mathbb{E}[\tilde{v}]}{P(\tilde{v} < v)} & \text{if } v > \bar{v}_\alpha,
            \end{cases}
        \end{align*}
        where $\bar{v}_\alpha$ satisfies
        $\alpha \mathbb{E}[v] = \E\left[  v | v < \bar{v}_\alpha \right] \mathbb{P}(v < \bar{v}_\alpha).$
    \item Alice employs a mixed strategy, randomly drawing value $v'$ from the distribution $F_v$ and bidding $\beta(v')$ as a function of the drawn value $v'$.
    \end{itemize}
\end{theorem}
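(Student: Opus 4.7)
The plan is to mirror the proof of Theorem~\ref{theorem:equilibrium}. First, I would establish the analogue of Lemma~\ref{lemma:no_pure_strategies}: any pure bid of Alice would be exploitable by Bob, so Alice must randomize. Next I would pin down Alice's equilibrium payoff at $\alpha(2-\alpha)\E[\tilde v]$. She can guarantee $\alpha\E[\tilde v]$ by bidding $L$ (winning whenever she is the sole bidder, which occurs with probability $\alpha$); the additional $(1-\alpha)\alpha\E[\tilde v]$ reflects rents she extracts against Bob in the competing subgame. Unlike the baseline, where Alice's equilibrium payoff is zero, this non-zero guarantee is what forces the threshold behavior in Bob's strategy.

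Conjecturing that Bob plays a non-decreasing $\beta$ with a threshold at $\bar v_\alpha$ (zero below, strictly increasing above), suppose Alice bids $b = \beta(v')$ for some $v' > \bar v_\alpha$ in her support. Her expected payoff in the competing subgame equals $I(v') - b\,F_v(v')$, where $I(v') := \int_0^{v'} u\,f_v(u)\,du$. Indifference of Alice over her support forces this to be constant in $v'$, i.e., $\beta(v')\,F_v(v') = I(v') - c$ for some constant $c$. Continuity of $\beta$ at the threshold ($\beta(\bar v_\alpha) = 0$), together with the requirement that Alice's payoff from her mixed bids match the guarantee from bidding $L$, pins down $c = \alpha\E[\tilde v]$. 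This simultaneously defines $\bar v_\alpha$ by the equation $I(\bar v_\alpha) = \alpha\E[\tilde v]$ and yields the claimed formula $\beta(v) = \E[\tilde v\mid \tilde v < v] - \alpha\E[\tilde v]/\mathbb{P}(\tilde v < v)$ for $v > \bar v_\alpha$.

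It then remains to verify Bob's best response. Given Alice's mixed strategy (drawing $v'\sim F_v$ and bidding $\beta(v')$), the distribution of her bid is $H(b) = F_v(\beta^{-1}(b))$ on $b > 0$. For $v > \bar v_\alpha$, Bob maximizes $(v-b)\,H(b)$; its first-order condition $(v-\beta)\,f_v = \beta'\,F_v$ is a direct rearrangement of the defining relation $\beta F_v = I - \alpha\E[\tilde v]$, which differentiates to $\beta' F_v + \beta f_v = v\,f_v$. For $v \leq \bar v_\alpha$, any positive bid yields negative expected surplus (since the marginal value $v - b$ is dominated by the cost of beating Alice's mass), so bidding $0$ is optimal.

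I expect the main obstacle to be cleanly determining the constant $c = \alpha\E[\tilde v]$. In the baseline proof the analogous constant vanishes (Alice's payoff is zero), so the ODE for $\beta$ can be integrated from zero with a trivial boundary condition. Here one must instead couple Alice's two events carefully: the information timing has to be such that her sole-bidder payoff of $\E[\tilde v]$ does not absorb any of her first-mover bid $b$, so that her indifference condition in the competing subgame aligns with Bob's first-order condition. A subsidiary subtlety concerns the atom in Alice's bid distribution at $0$ (of mass $F_v(\bar v_\alpha)$) and the resulting tie-breaking at the reserve, which must be handled to confirm that deviations outside the smooth portion of the support are strictly suboptimal for both players.
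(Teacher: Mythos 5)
Your overall architecture matches the paper's: rule out pure strategies for Alice, pin down her equilibrium payoff, use her indifference condition to solve for $\beta$, and verify Bob's best response through the first-order condition $(v-\beta)f_v=\beta' F_v$. (The paper carries out that last step by solving the ODE for $H(v)=F_a(\beta(v))$ and checking $H=F_v$, which is equivalent to your FOC check.) You also land on the correct formula for $\beta$, the correct threshold equation $I(\bar v_\alpha)=\alpha\E[\tilde v]$, and you correctly flag the atom of Alice's bid distribution at $0$, which the paper passes over.

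The gap is exactly where you predicted it would be: the determination of the constant. First, your claimed equilibrium payoff $\alpha(2-\alpha)\E[\tilde v]$ is asserted rather than derived, and it disagrees with the paper, which argues that the infimum of Alice's bid support must be $0$, that a bid of $0$ wins only when Bob is absent (yielding $\alpha\E[\tilde v]$), and hence that her payoff on the \emph{entire} support is $\alpha\E[\tilde v]$; this is the value the paper then plugs into the indifference condition. Second, and more substantively, your step ``indifference of Alice over her support forces $I(v')-b\,F_v(v')$ to be constant'' does not follow as stated: indifference constrains her \emph{total} payoff, which under your own accounting is $\alpha\left(\E[\tilde v]-b\right)+(1-\alpha)\left(I(v')-b\,F_v(v')\right)$, and the sole-bidder branch contributes the bid-dependent term $-\alpha b$, so constancy of the total does not imply constancy of the competing-subgame piece. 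You then close the argument by setting $c=\alpha\E[\tilde v]$ through a matching step that equates a competing-subgame payoff with a total-payoff guarantee; this happens to produce the theorem's $\beta$, but the two halves of your derivation are mutually inconsistent (a constant competing-subgame payoff of $\alpha\E[\tilde v]$ added to $\alpha(\E[\tilde v]-b)$ is not constant in $b$, so $\alpha(2-\alpha)\E[\tilde v]$ cannot be her payoff at every bid in her support). The paper avoids this by writing the single indifference condition $\alpha\E[\tilde v]=\E\left[(\tilde v-b_A)\I{b_A>\beta(\tilde v)}\right]$ for every $b_A$ in the support and reading $\beta$ off from it directly. To repair your version you would need to state precisely how the Bob-absent branch enters Alice's payoff and show that, with that accounting, the relevant constant is $\alpha\E[\tilde v]$ and not $\alpha(2-\alpha)\E[\tilde v]$.
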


Therefore in this case, by observation, Alice makes a positive profit in equilibrium, but the structure of the equilibrium remains similar to the case with no uncertainty. Further the comparative statics remain somewhat similar: Alice's rents are proportional to $\alpha$, the probability with which she is the last-mover. Bob benefits from the additional shading in Alice's bid relative to the case with no uncertainty, while the seller revenue is lower even conditional on the event that Bob is able to act.


\section{Continuous-Time Setting and Comparative Statics}\label{sec:cts}

In this section, we consider an instantiation of our auction model in a continuous-time
setting, in the context of the on-chain auction of Section \ref{sec:on-chain-auction}.  Here, we
imagine the auction is selling a single unit of token $A$, whose price evolves according to a
geometric Brownian motion with volatility $\sigma$, i.e., it satisfies the stochastic differential
equation (SDE)   $dp_t/p_t = \sigma\, dB_t$,
where $B_t$ is a standard Brownian motion. This is consistent with the standard assumptions of the
Black-Scholes model for option pricing.

We imagine Alice bids at time $0$, Bob bids at time $T > 0$, and the final settlement occurs at
time $\tau > T$. All agents are risk neutral (and, given the short time horizon, we assume a zero risk-free rate), so that, at
any time $t$, the asset is valued according to $\E[p_\tau|p_t] = p_t$.

Then, the value of the auction is determined by the price at
time $t=T$ when Bob bids, i.e., $v = \E[p_\tau|p_T] = p_T$. Solving the SDE for $p_T$,
\newedit{it is easy to see that $p_T$ is lognormally distributed, so} this is equivalent to Bob
observing value $v = p_T = p_0 e^{-\sigma^2 T/2 + \sigma \sqrt{T} Z}$,
where $p_0$ is the initial price (observed when Alice bids), and $Z$ is a zero mean, unit variance
Gaussian random variable.

Without loss of generality, we will fix the parameters $\sigma=1$ (which is equivalent to
normalizing the units of time) and $p_0=1$ (which is equivalent to normalizing the units of price).
Then, we
can calculate the value of Bob's profit and auctioneer's revenue as a function of Bob's latency
advantage $T$. For simplicity, we focus on the case $L=0$ (no reserve price), see
\Cref{fig:payoff_revenue}.\todo{typo here}

\begin{figure}[t]
  \centering
  \includegraphics[width=0.70\textwidth]{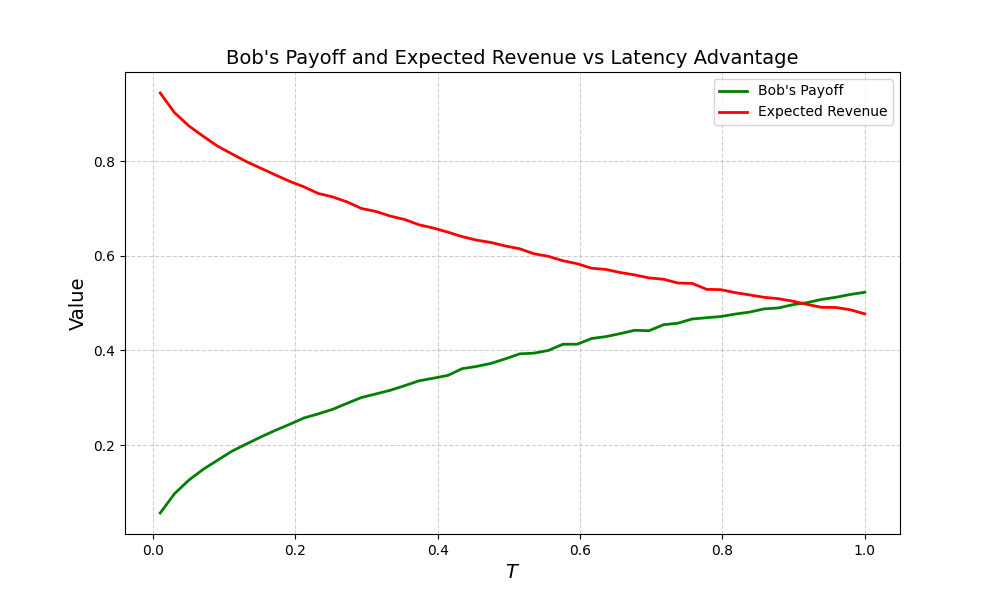}
  \caption{Bob's payoff and auctioneer's expected revenue as a function of latency advantage
    $T$. As Bob's latency advantage increases, his payoff grows while the auctioneer's revenue
    decreases.
    \label{fig:payoff_revenue}}
\end{figure}

Note that since $p_0=1$ the total surplus is 1 regardless of Bob's latency advantage. With Alice
guaranteed to make 0 expected profit, the total surplus is the auctioneer's revenue plus Bob's
payoff, i.e., this is exactly a constant sum game. As Bob's latency advantage increases, his
payoff grows while the auctioneer's revenue decreases.



\section{Value of Timing Games}\label{sec:timing-games}

In the setting of \Cref{sec:cts}, the distribution of Bob's value $v = p_T$ has clear
dependence on the time horizon $T$ at which Bob bids. We wish to quantify Bob's incentive for
\emph{timing games}: in general, Bob will profit more by extending $T$. We will compare this
timing pressure in two settings:
\begin{enumerate}
\item The competitive auction setting of our model from Section \ref{sec:model}, where Bob bids against
  Alice.
\item A monopolist setting, where Bob bids in isolation. Here, after observing $v$, Bob
  will simply bid the limit price $L$ if $v > L$, and will otherwise bid zero. Hence, Bob's
  expected profit
  in the monopolist setting is
  \[
    \Pi_M(T) \triangleq \E[ \max(v - L, 0)].
  \]
  This is the value of a call option on $v$, with strike price $L$.
\end{enumerate}

\subsection{Assuming $L=0$.}

In the competitive auction setting with no reserve price, i.e., $L=0$, denote Bob's profit by $\Pi_B(T)$, highlighting the dependence
on $T$. From \Cref{corr:exchange_option}, this is given by the value of an exchange option. Using
Margrabe's formula \citep{margrabe1978value} for the value of an exchange option under the
Black-Scholes model, we have that
\begin{align*}
  \Pi_B(T) & \triangleq \E\left[ \max(v_1 - v_2, 0) \right]
             = p_0
             \left( \Phi\left( \sigma \sqrt{T/2}  \right)
             - \Phi\left( - \sigma \sqrt{T/2}  \right)
             \right) \\
           & =
             p_0
             \left( 2 \Phi\left( \sigma \sqrt{T/2}  \right) - 1
             \right),
\end{align*}
where $\Phi(\cdot)$ is the standard normal CDF.  We can quantify the value of extending the time
interval for Bob's decision through the derivative with respect to time. \newedit{This corresponds
  to what is often called the
``theta'' of an option in the options pricing literature. The theta is the derivative of the
option value with respect to remaining time.} Here, we have that\footnote{\newedit{This expression is
  the theta of an exchange option, and is the also the
  theta of an at-the-money call or put, with half the volatility, or half the time to expiry, as
  the exchange option.}}
\[
  \Pi_B'(T)
  = p_0 \frac{\sigma}{\sqrt{2 T}} \phi\left( \sigma \sqrt{T/2}  \right) > 0,
\]
where $\phi(\cdot)$ is the standard normal PDF. In particular, the incremental value of increasing $T$ to
$T+\Delta t$, for small $\Delta t$, is
\begin{equation}\label{eq:Pi_B_T}
  \Pi_B(T+\Delta t) - \Pi_B(T) = \Pi_B'(T) \Delta t + o(\Delta t)
  = p_0 \frac{\sigma}{\sqrt{2 T}} \phi\left( \sigma \sqrt{T/2}  \right) \Delta t
  + o(\Delta t).
\end{equation}
Thus, the incremental value of a delay of $\Delta t$ is of order $\Delta t$.

On the other hand, if Bob was a monopolist, his profit would be
$\Pi_M(T) = \E\left[ v  \right] = p_0$,
so that
\begin{equation}\label{eq:Pi_M_T}
  \Pi_M(T+\Delta t) - \Pi_M(T) = 0.
\end{equation}
In this case there is zero incremental value for delays. This is because, in this case, the
monopolist payoff is equivalent to that of an option with a zero strike price. Such an option will
\emph{always} be exercised, and hence only has intrinsic value: there is no time value since
delaying cannot influence the decision to exercise.

Comparing \eqref{eq:Pi_B_T}--\eqref{eq:Pi_M_T}, we see that the timing pressure is greater
in the competitive auction. This is intuitive, since in the auction setting, increasing Bob's time
advantage over Alice increases Bob's informational advantage, and hence the value Bob can extract.

\subsection{Assuming $L>0$.}

Now, consider the more general case with reserve price $L>0$. The following corollary characterizes Bob's profit as a function of $T$.
\begin{corollary}\label{corollary:profit_L}
  Without loss of generality, assume that prices are normalized so that $p_0=1$. Then, Bob's
  profit is given by
  \begingroup
  \small
  \allowdisplaybreaks
  \begin{align}
    \Pi_B(T)
    & = 1-\Phi\left(\frac{\ln\left(\underline{v}\right)}{\sigma\sqrt{T}}-\frac{\sigma\sqrt{T}}{2}\right)\Phi\left(\frac{\ln\left(\underline{v}\right)}{\sigma\sqrt{T}}+\frac{\sigma\sqrt{T}}{2}\right)\notag\\
    & \quad-\frac{2}{\sigma\sqrt{T}}\int_{\ln\left(\underline{v}\right)}^{\infty}\phi\left(\frac{w}{\sigma\sqrt{T}}+\frac{\sigma\sqrt{T}}{2}\right)\Phi\left(\frac{w}{\sigma\sqrt{T}}-\frac{\sigma\sqrt{T}}{2}\right)\, dw\notag\\
    & \quad+\Phi\left(\frac{\ln\left(\underline{v}\right)}{\sigma\sqrt{T}}-\frac{\sigma\sqrt{T}}{2}\right)\Phi\left(\frac{\ln\left(L\right)}{\sigma\sqrt{T}}+\frac{\sigma\sqrt{T}}{2}\right)\notag\\
    & \quad-\Phi\left(\frac{\ln\left(\underline{v}\right)}{\sigma\sqrt{T}}+\frac{\sigma\sqrt{T}}{2}\right)\Phi\left(\frac{\ln\left(L\right)}{\sigma\sqrt{T}}-\frac{\sigma\sqrt{T}}{2}\right). \label{eq:bob_profit}
  \end{align}

  The sensitivity of Bob's profit with respect to the timing advantage $T$ is given by

  \begin{align}
    \Pi'_B(T)
    & = \Phi\left(\frac{\ln\left(\underline{v}\right)}{\sigma\sqrt{T}}-\frac{\sigma\sqrt{T}}{2}\right)\cdot\frac{\sigma}{\sqrt{T}}\cdot \notag\\
    & \quad \left(\frac{\phi\left(\frac{\ln\left(L\right)}{\sigma\sqrt{T}}+\frac{\sigma\sqrt{T}}{2}\right)}{2}+\frac{1}{\sigma\sqrt{T}}\left(\frac{\ln\left(\underline{v}\right)}{\sigma\sqrt{T}}-\frac{\sigma\sqrt{T}}{2}\right)\phi\left(\frac{\ln\left(\underline{v}\right)}{\sigma\sqrt{T}}+\frac{\sigma\sqrt{T}}{2}\right)\right)\notag\\
    & \quad +\frac{\sigma\cdot\underline{v}\cdot\phi\left(\frac{\ln\left(\underline{v}\right)}{\sigma\sqrt{T}}+\frac{\sigma\sqrt{T}}{2}\right)}{2\sqrt{T}\left(\underline{v}-L\right)}\left(L\cdot\Phi\left(\frac{\ln\left(L\right)}{\sigma\sqrt{T}}+\frac{\sigma\sqrt{T}}{2}\right)-\Phi\left(\frac{\ln\left(L\right)}{\sigma\sqrt{T}}-\frac{\sigma\sqrt{T}}{2}\right)\right)\notag\\ \displaybreak[2]
    & \quad +\frac{1}{2\sqrt{2}}\frac{\sigma \cdot e^{-\frac{\sigma^{2}T}{4}}}{\sqrt{2\pi}\sqrt{T}}\left(\frac{\sqrt{2}}{\sigma\sqrt{T}}\phi\left(\frac{\ln\left(\underline{v}\right)}{\left(\frac{\sigma\sqrt{T}}{\sqrt{2}}\right)}\right)+1-\Phi\left(\frac{\ln\left(\underline{v}\right)}{\left(\frac{\sigma\sqrt{T}}{\sqrt{2}}\right)}\right)\right)\notag\\
    & \quad +\frac{\sigma}{\sqrt{T}}\left(\frac{1}{\sigma^{2}T}+\frac{1}{4}\right)\int_{\ln\left(\underline{v}\right)}^{\infty}\phi\left(\frac{w}{\sigma\sqrt{T}}+\frac{\sigma\sqrt{T}}{2}\right)\Phi\left(\frac{w}{\sigma\sqrt{T}}-\frac{\sigma\sqrt{T}}{2}\right)\, dw\notag\\
    & \quad -\frac{\sigma}{\sqrt{T}}\frac{1}{\left(\sigma^{2}T\right)^{2}}\int_{\ln\left(\underline{v}\right)}^{\infty}\phi\left(\frac{w}{\sigma\sqrt{T}}+\frac{\sigma\sqrt{T}}{2}\right)\Phi\left(\frac{w}{\sigma\sqrt{T}}-\frac{\sigma\sqrt{T}}{2}\right)w^{2}\, dw. \label{eq:bob_theta}
  \end{align}
  \endgroup
\end{corollary}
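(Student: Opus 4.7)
The plan is to instantiate Theorem \ref{theorem:bob_profit_L} under the lognormal model $v = \exp(-\sigma^2 T/2 + \sigma \sqrt{T}\, Z)$ with $Z \sim N(0,1)$, evaluate each of the three expectations in closed form using standard Black-Scholes identities, and then differentiate the resulting expression with respect to $T$, treating $\underline v$ as an implicit function of $T$.

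I would first record the building blocks. Writing $d_\pm(x) = (\ln x)/(\sigma\sqrt T) \pm (\sigma\sqrt T)/2$, the standard log-normal moment identities (with $p_0 = 1$) give $F_v(x) = \Phi(d_+(x))$, $\E[v\,\I{v < x}] = \Phi(d_-(x))$, and $\E[v] = 1$, together with the pointwise identity $x\,\phi(d_+(x)) = \phi(d_-(x))$. The defining equation for $\underline v$ reduces to $\Phi(d_-(\underline v)) = L\,\Phi(d_+(\underline v))$, which implicitly determines $\underline v = \underline v(T)$. The truncated-call contribution $F_v(\underline v)\E[(v-L)\I{L<v<\underline v}]$ expands directly in terms of $d_\pm$ and collapses, via the defining relation, to $\Phi(d_-(\underline v))\Phi(d_+(L)) - \Phi(d_+(\underline v))\Phi(d_-(L))$; the offset contribution evaluates immediately to $\Phi(d_-(\underline v))(1 - \Phi(d_+(\underline v)))$. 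The truncated exchange-option term $\E[\max(v_1\I{v_1 \geq \underline v} - v_2\I{v_2 \geq \underline v}, 0)]$ I would decompose by indicator expansion into a ``Bob only'' piece $F_v(\underline v)\E[v\,\I{v \geq \underline v}]$ and a bivariate piece $\E[(v_1 - v_2)^+\I{v_1, v_2 \geq \underline v}]$; for the bivariate piece I would substitute $w = \ln v$, apply $\phi(d_-) = e^w\phi(d_+)$ inside the Black-Scholes call formula, and use integration by parts to fold two mixed $\phi$-$\Phi$ integrals into a single copy of $I(T,\underline v) \triangleq \int_{\ln\underline v}^\infty \phi(d_+(e^w))\Phi(d_-(e^w))\,dw$, producing the multiplier $2/(\sigma\sqrt T)$ appearing in \eqref{eq:bob_profit}. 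Summing the three contributions and cancelling terms via the defining relation yields \eqref{eq:bob_profit}.

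For the derivative $\Pi_B'(T)$, I would differentiate \eqref{eq:bob_profit} term by term, writing each total derivative as an explicit partial in $T$ plus a chain-rule contribution through $\underline v(T)$. The explicit partials use $\partial d_\pm/\partial T = -(\ln x)/(2\sigma T^{3/2}) \pm \sigma/(4\sqrt T)$ and $\partial d_\pm/\partial x = 1/(x\sigma\sqrt T)$; Leibniz's rule applied to $I(T,\underline v)$ produces both a boundary term at $w = \ln \underline v$ (which regroups against the $\Phi(d_-(\underline v))\Phi(d_+(\underline v))$ contribution) and an interior derivative whose pieces give the $(1/(\sigma^2 T) + 1/4)$ and $-1/(\sigma^2 T)^2 \cdot w^2$ coefficients on the two integrals of \eqref{eq:bob_theta}. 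Implicit differentiation of $\Phi(d_-(\underline v)) = L\,\Phi(d_+(\underline v))$, combined with the pointwise identity $\underline v\,\phi(d_+(\underline v)) = \phi(d_-(\underline v))$, yields a compact expression for $\underline v'(T)$ whose denominator is $(\underline v - L)$, producing the visible $\underline v/(\underline v - L)$ coefficient in the third line of \eqref{eq:bob_theta}.

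The main obstacle is algebraic bookkeeping rather than any conceptual novelty: many of the resulting terms share nearly identical arguments $d_\pm(\underline v)$ and $d_\pm(L)$, and several crucial simplifications only appear after invoking the defining relation at exactly the right point. To keep the bookkeeping tractable I would first express $\Pi_B(T)$ abstractly as a function of the four scalars $\Phi(d_\pm(\underline v))$, $\Phi(d_\pm(L))$ and the integral $I(T,\underline v)$, carry out the symbolic differentiation at that level, substitute the explicit $d_\pm$ and $\underline v'(T)$ only at the end, and sanity-check by specializing to $L = 0$, where the formula must reduce to the Margrabe and theta expressions of the preceding subsection.
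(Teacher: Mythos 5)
Your proposal follows exactly the route of the paper's (very terse) proof: direct evaluation of the expression in Theorem \ref{theorem:bob_profit_L} under the lognormal/Black--Scholes specification to get \eqref{eq:bob_profit}, followed by differentiation in $T$ with $\underline{v}$ treated as an implicit function of $T$ via $L = \E[\tilde v \mid \tilde v < \underline v]$. The details you supply (the $d_\pm$ identities, the collapse of the truncated-call term to $\Phi(d_-(\underline v))\Phi(d_+(L)) - \Phi(d_+(\underline v))\Phi(d_-(L))$ via the defining relation, and the $L=0$ reduction to the Margrabe formula) all check out, so this is simply a fleshed-out version of the same argument.
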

\begin{proof}
  \eqref{eq:bob_profit} follows from the direct evaluation of the expression in
  \Cref{theorem:bob_profit_L} in the present context. \eqref{eq:bob_theta} follows from the
  differentiation of \eqref{eq:bob_profit}, observing the fact that $\underline{v}$ depends on $T$
  through the relationship $L= \mathbb{E}[v| v < \underline v]$.
\end{proof}

By contrast, in the monopolist setting, the time pressure is equal to the theta of a call option
with strike price $L$ (again with prices normalized so that $p_0 = 1$):

\[
  \Pi_M'(T)
  = \frac{\sigma}{2 \sqrt{T}} \phi\left( \frac{\ln\left(\frac{1}{L}\right)}{\sigma\sqrt{T}}+\frac{\sigma\sqrt{T}}{2} \right).
\]

By plotting this timing pressure as a function of $L$ (see Figure \ref{fig:theta}), we can see that for
the last mover, timing pressure appears to be higher for the last mover than for the monopolist,
but that they converge as $L$ approaches $p_0=1$.

\begin{figure}[t]
  \begin{center}
    \includegraphics[width=0.70\textwidth]{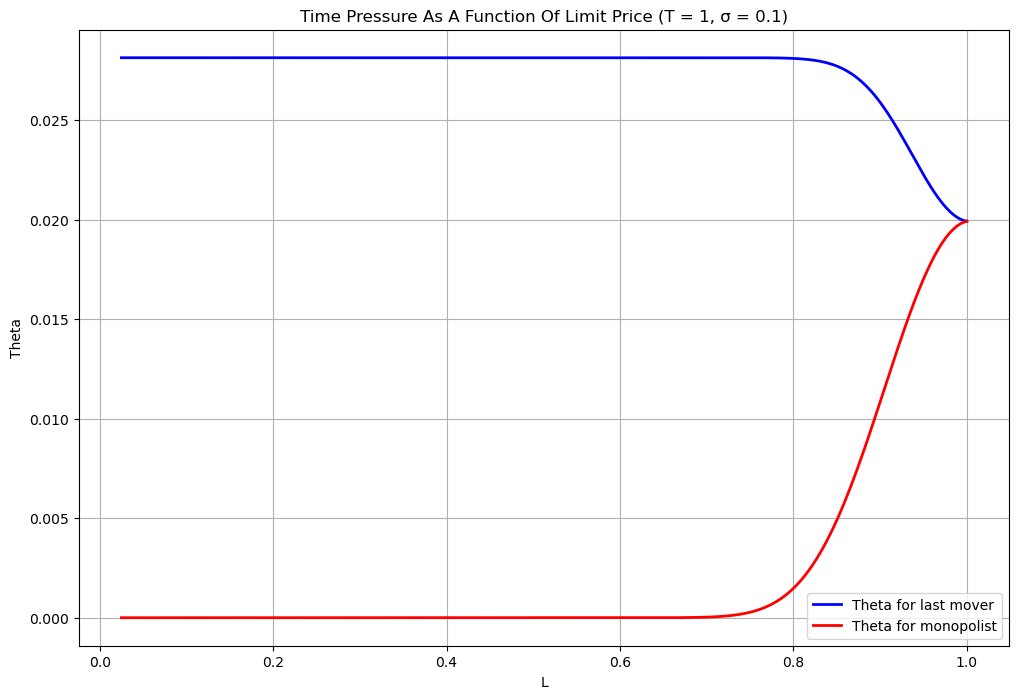}
  \end{center}
  \caption{The last mover's theta (timing pressure) as a function of the limit price, compared
    with the monopolist's theta as a function of the limit price.}\label{fig:theta}
\end{figure}





%
{\scriptsize\smallskip\noindent\textbf{\ackname.} The authors thank Max Resnick for helpful discussions and comments. The first author acknowledges support from the Briger Family Digital Finance Lab at Columbia Business School.\par}

\begingroup
\scriptsize
\sloppy
\setlength{\bibsep}{0pt plus 0.2ex}
\def\UrlFont{\rmfamily\scriptsize}
\Urlmuskip=0mu plus 1mu
\bibliography{references}

\begin{thebibliography}{18}
\providecommand{\natexlab}[1]{#1}
\providecommand{\url}[1]{\texttt{#1}}
\expandafter\ifx\csname urlstyle\endcsname\relax
  \providecommand{\doi}[1]{doi: #1}\else
  \providecommand{\doi}{doi: \begingroup \urlstyle{rm}\Url}\fi

\bibitem[Budish et~al.(2015)Budish, Cramton, and Shim]{budish}
Eric Budish, Peter Cramton, and John Shim.
\newblock { The High-Frequency Trading Arms Race: Frequent Batch Auctions as a
  Market Design Response *}.
\newblock \emph{The Quarterly Journal of Economics}, 130\penalty0 (4):\penalty0
  1547--1621, 07 2015.
\newblock ISSN 0033-5533.
\newblock \doi{10.1093/qje/qjv027}.
\newblock URL \url{https://doi.org/10.1093/qje/qjv027}.

\bibitem[Dubra(2006)]{dubra2006correction}
Juan Dubra.
\newblock A correction to uniqueness in "competitive bidding and proprietary
  information".
\newblock \emph{Journal of Mathematical Economics}, 42\penalty0 (1):\penalty0
  56--60, 2006.

\bibitem[Engelbrecht-Wiggans et~al.(1983)Engelbrecht-Wiggans, Milgrom, and
  Weber]{engelbrecht}
Richard Engelbrecht-Wiggans, Paul~R. Milgrom, and Robert~J. Weber.
\newblock Competitive bidding and proprietary information.
\newblock \emph{Journal of Mathematical Economics}, 11\penalty0 (2):\penalty0
  161--169, 1983.
\newblock ISSN 0304-4068.
\newblock \doi{https://doi.org/10.1016/0304-4068(83)90034-4}.
\newblock URL
  \url{https://www.sciencedirect.com/science/article/pii/0304406883900344}.

\bibitem[Fox and Resnick(2023)]{multiplicity2023}
Elijah Fox and Max Resnick.
\newblock Multiplicity: A gadget for multiple concurrent block proposers.
\newblock
  \url{https://ethresear.ch/t/multiplicity-a-gadget-for-multiple-concurrent-block-proposers/14962},
  2023.
\newblock Accessed: 2024-10-10.

\bibitem[Fox et~al.(2023)Fox, Pai, and
  Resnick]{fox2023censorshipresistanceonchainauctions}
Elijah Fox, Mallesh Pai, and Max Resnick.
\newblock Censorship resistance in on-chain auctions, 2023.
\newblock URL \url{https://arxiv.org/abs/2301.13321}.

\bibitem[Garg et~al.(2024)Garg, Kolonelos, Policharla, and Wang]{threshold}
Sanjam Garg, Dimitris Kolonelos, Guru-Vamsi Policharla, and Mingyuan Wang.
\newblock Threshold encryption with silent setup.
\newblock Cryptology {ePrint} Archive, Paper 2024/263, 2024.
\newblock URL \url{https://eprint.iacr.org/2024/263}.

\bibitem[Hausch(1987)]{hausch1987asymmetric}
Donald~B Hausch.
\newblock An asymmetric common-value auction model.
\newblock \emph{The RAND Journal of Economics}, pages 611--621, 1987.

\bibitem[Margrabe(1978)]{margrabe1978value}
William Margrabe.
\newblock The value of an option to exchange one asset for another.
\newblock \emph{The journal of finance}, 33\penalty0 (1):\penalty0 177--186,
  1978.

\bibitem[Milgrom and Weber(1982)]{milgrom1982theory}
Paul~R Milgrom and Robert~J Weber.
\newblock A theory of auctions and competitive bidding.
\newblock \emph{Econometrica: Journal of the Econometric Society}, pages
  1089--1122, 1982.

\bibitem[Milionis et~al.(2022)Milionis, Moallemi, Roughgarden, and
  Zhang]{milionis2022automated}
Jason Milionis, Ciamac Moallemi, Tim Roughgarden, and Anthony~L. Zhang.
\newblock {Automated Market Making and Loss-Versus-Rebalancing}, 2022.
\newblock URL \url{https://arxiv.org/abs/2208.06046}.

\bibitem[Ortega-Reichert(1968)]{ortega1968models}
Armando Ortega-Reichert.
\newblock \emph{Models for competitive bidding under uncertainty}.
\newblock Stanford University, 1968.

\bibitem[Pai and Resnick(2023)]{pai2023structuraladvantagesintegratedbuilders}
Mallesh Pai and Max Resnick.
\newblock Structural advantages for integrated builders in mev-boost, 2023.
\newblock URL \url{https://arxiv.org/abs/2311.09083}.

\bibitem[Rivest et~al.(1996)Rivest, Shamir, and Wagner]{timelock}
R.~L. Rivest, A.~Shamir, and D.~A. Wagner.
\newblock Time-lock puzzles and timed-release crypto.
\newblock Technical report, USA, 1996.

\bibitem[Robinson and White(2024)]{priority}
Dan Robinson and Dave White.
\newblock Priority is all you need.
\newblock \url{https://www.paradigm.xyz/2024/06/priority-is-all-you-need},
  2024.
\newblock Accessed: 2024-10-10.

\bibitem[Schlegel and Mamageishvili(2022)]{schlegel2022onchain}
Jan~Christoph Schlegel and Akaki Mamageishvili.
\newblock On-chain auctions with deposits, 2022.

\bibitem[Schwarz-Schilling et~al.(2023)Schwarz-Schilling, Saleh, Thiery, Pan,
  Shah, and Monnot]{schwarzschilling2023timemoneystrategictiming}
Caspar Schwarz-Schilling, Fahad Saleh, Thomas Thiery, Jennifer Pan, Nihar Shah,
  and Barnabé Monnot.
\newblock Time is money: Strategic timing games in proof-of-stake protocols,
  2023.
\newblock URL \url{https://arxiv.org/abs/2305.09032}.

\bibitem[Thiery et~al.(2024)Thiery, Monnot, D'Amato, and Ma]{thiery2024focil}
Thomas Thiery, Barnabé Monnot, Francesco D'Amato, and Julian Ma.
\newblock Fork-choice enforced inclusion lists (focil): A simple
  committee-based inclusion list proposal.
\newblock
  \url{https://ethresear.ch/t/fork-choice-enforced-inclusion-lists-focil-a-simple-committee-based-inclusion-list-proposal/19870},
  2024.
\newblock Accessed: 2024-10-10.

\bibitem[White et~al.(2024)White, Robinson, Thouvenin, and
  Srinivasan]{leaderless}
Dave White, Dan Robinson, Ludwig Thouvenin, and Karthik Srinivasan.
\newblock Leaderless auctions.
\newblock \url{https://www.paradigm.xyz/2024/02/leaderless-auctions}, 2024.
\newblock Accessed: 2024-10-10.

\end{thebibliography}
\endgroup
\ifthenelse{\boolean{camerareadyversion}}{%
}{%
  \appendix
  \section{Appendix: Proofs}

\subsection{Proof of Lemma \ref{lemma:alice_bids}}
\begin{proof}
    Suppose Bob's bid as a function of the value $v$ is $\beta_L(v)$. Assume this is non-decreasing in $v$. Define $\beta_L^{-1}$ to be the pseudo-inverse of $\beta_L$, i.e. $\beta_L^{-1}(b) = \inf \{v: \beta_L(v) \geq b\}$.

    Now, note that Alice would never bid $b_A > \mathbb{E}[v]$, since in that case her payoff would be
  \[
    \E\left[ (v - b_A) \I{b_A \geq
        \max\{\beta_L(v), L\}
      }
      \right] \leq \E[v] - b_A < 0.
   \]
  So, $\mathcal{B}_A \subset [0,\E[v]]$. This also implies Bob's bid is bounded,  $\beta_L(v) \leq \mathbb{E}[v]$, as well.
  For any bid $b_A\in\mathcal{B}_A$ with $b_A > L$, Alice's payoff will be
  \[
    \begin{split}
      0 & = \E\left[ (v - b_A) \I{b_A > \beta_L(v)} \right] \\
        & = \E\left[  v - b_A | v < \beta_L^{-1}(b_A) \right] P(v < \beta_L^{-1}(b_A)) \\
        & = \left( \E\left[  v | v < \beta_L^{-1}(b_A) \right] - b_A \right)
          P(v < \beta_L^{-1}(b_A)). \\
    \end{split}
  \]
  Therefore, we must have, for any $b_A > L$,
  \[
    b_A = \E\left[  v | v < \beta_L^{-1}(b_A) \right].
  \]
  \end{proof}

  \subsection{Proof of Lemma \ref{lemma:bob_bids}}
  \begin{proof}
    We do this in two steps. First, consider the case in which $L=0$. In this case, Bob's bid is:
  \begin{align}\label{eq:bob_bid_0}
      \beta_0(v) = \mathbb{E}[\tilde{v}| \tilde{v} \leq v].
  \end{align}
  Note that we can write $\beta_0(v)$ as:
  \begin{align}
      \beta_0(v) &= \int_0^v x\frac{dF_v(x)}{F_v(v)},\notag\\
       & = v - \frac{\int_0^v F_v(x) dx}{F_v(v)}, \label{eq:beta_v}
  \end{align}
  where the latter equality follows from integration by parts. Note that then we have that
  \begin{align}\label{eq:beta_prime}
  \beta_0'(v) = \frac{f_v(v) \int_0^v F_v(x) dx}{(F_v(v))^2}.
  \end{align}
  To ensure that Bob bids in the way described by $\beta_0(\cdot)$, Alice must bid according to a distribution $F_a, f_a$ such that:
  \begin{align*}
      &\beta_0(v) \in \arg \max_{b \in [0,1]} (v- b)F_a(b),\\
      \implies &(v- \beta_0(v)) f_a(\beta_0(v)) - F_a(\beta_0(v))  = 0, \\
      \implies & v = \beta_0(v) + \frac{F_a(\beta_0(v))}{f_a(\beta_0(v))}.
  \end{align*}

  Let's define a distribution $H(v) = F_a(\beta_0(v))$. By definition $H(v)$ is a valid CDF on the reals. Note that in this case $h(v) = H'(v) = f_a(\beta_0(v)) \beta_0'(v)$. Therefore we implicitly need to find $H(v)$ such that:
  \begin{align*}
          &v - \beta_0(v) = \frac{H(v)\beta_0'(v)}{h(v)}. \\
          \implies & \frac{dH(v)}{H(v)} = \frac{\beta_0'(v) dv}{v- \beta_0(v)}.\\
          \implies & H(v) = \exp \left( \int_0^v \frac{\beta_0'(x) dx}{x-\beta_0(x) } + C\right).
  \intertext{Note that since $\lim _{v \to \infty} H(v) = 1$, $C = - \int_0^\infty \frac{\beta_0'(x) dx}{x-\beta_0(x) }$.}
  \implies & H(v) = \exp \left(  - \int_v^\infty \frac{\beta_0'(x) dx}{x-\beta_0(x) } \right).
  \end{align*}
  Now substituting in \eqref{eq:beta_prime} and \eqref{eq:beta_v} and simplifying, we get that:
  \begin{align*}
      H(v) &= \exp \left(  - \int_v^\infty \frac{\beta_0'(x) dx}{x-\beta_0(x) } \right)\\
   &= \exp \left(  - \int_v^\infty \frac{\frac{f_v(x) \int_0^x F_v(y) dy}{(F_v(x))^2}}{\frac{\int_0^x F_v(y) dy}{F_v(x)}} dx \right)\\
   &= \exp \left(  - \int_v^\infty \frac{f_v(x)}{(F_v(x))} dx \right)\\
   &= F_v(v),
  \end{align*}
  which is exactly the distribution of $H(v)$ that would result under Alice's conjectured strategy.

  Next, consider the case in which $L>0$. In this case, we can think of Alice's strategy as the same as under no limit price, since she bids exactly $L$ in the case of no-limit price with probability zero. The fact that Bob's bidding strategy above is a best-response in the limit-price case then follows straightforwardly: If the best response under no-limit price would have been strictly larger than $L$, then the limit-price is irrelevant. If instead the best response would have been lower than $L$ but the price is larger than $L$, then it is trivially a best response to bid $L$. This concludes the proof of the theorem.
  \end{proof}

  \subsection{Proof of Theorem \ref{theorem:bob_profit_0}}
  \begin{proof}
    Note, we can write Bob's expected profit as,
    \[
      \begin{split}
        \Pi_B = \int_0^\infty \left(v - \beta_0(v)\right) H(v) f_v(v) \, dv.
      \end{split}
    \]
    Where (from the proof of Theorem \ref{theorem:equilibrium}) $H(v) = F_a(\beta_0(v))$ is the probability that
    Bob wins, given that he observes value $v$.  Substituting in $H(v) = F_v(v)$ (from the proof of Theorem
    \ref{theorem:equilibrium}) and $\beta_0(v)$ from \eqref{eq:beta_v}, we get that
    \[
      \begin{split}
        \Pi_B& = \int_0^\infty \left(v - \beta_0(v)\right) F_v(v) f_v(v) \, dv \\
             &= \int_0^\infty \left(\int_0^v F_v(x) \, dx \right) f_v(v) \, dv.
      \end{split}
    \]
    Doing integration by parts,
    \[
      \begin{split}
        \Pi_B &=  \left. \left(\int_0^v F_v(x) \, dx\right) F_v(v) \right\vert_{v=0}^\infty -
                \int_0^\infty F_v(x)^2 \, dx\\
              &= \int_0^\infty F_v(x) \, dx  - \int_0^\infty F_v(x)^2 \, dx\\
              &= \int_0^\infty \left(1-F_v(x)^2\right) \, dx
                - \int_0^\infty (1-F_v(x)) \, dx.
      \end{split}
    \]
    Using the fact that $F_v(\cdot)^2$ is the CDF of $\max(v_1,v_2)$, where $v_1,v_2\sim F_v$ are
    i.i.d.\ draws,
    \[
      \begin{split}
        \Pi_B &=  \E\left[ \max(v_1,v_2) \right] - \E[v_2] \\
              & = \E\left[ \max(v_1 - v_2,0) \right],
      \end{split}
    \]
    as desired.
    \end{proof}

\subsection{Proof of Theorem \ref{theorem:bob_profit_L}}
\begin{proof}
    Bob's expected profit is given by
    \[
      \begin{split}
        \Pi_B & = \int_L^\infty \left(v- \beta_L(v)\right) F_a(\beta_L(v)) f_v(v) \, dv \\
              & = \int_{\underline{v}}^\infty \left(v- \beta_0(v)\right) H(v) f_v(v) \, dv
                +  \int_L^{\underline{v}} (v- L) H(\underline{v}) f_v(v) \, dv, \\
      \end{split}
    \]
    where, for the first equality, we use the observation that $F_a(\beta_L(v))$ is the probability
    that Bob wins given he observes $v$, and for the second equality we use \eqref{eq:bob_bid} and
    $H(v) = F_a(\beta_0(v))$ from the proof of Theorem \ref{theorem:equilibrium}. Then,
    substituting in $H(v) = F_v(v)$ (from the proof of
    Theorem \ref{theorem:equilibrium}) and $\beta_0(v)$ from \eqref{eq:beta_v},
    \[
      \begin{split}
        \Pi_B
    & = \int_{\underline{v}}^\infty \left(v- \beta_0(v)\right) F_v(v) f_v(v) \, dv \\
              & \quad
                +  \left(\E[v| L < v < \underline{v}]- L\right)
                F_v(\underline{v}) (F_v(\underline{v})- F_v(L)) \\
              &= \int_{\underline{v}}^\infty \left(\int_0^v F_v(x) \, dx\right) f_v(v) \, dv +
                \E[v - L | L < v < \underline{v}] F_v(\underline{v})
                (F_v(\underline{v})- F_v(L)).\\
      \end{split}
    \]
    The first term can be simplified via integration by parts as
    \[
      \begin{split}
        \int_{\underline{v}}^\infty \left(\int_0^v F_v(x) \, dx\right) f_v(v) \, dv
        &= \left. \left(\int_0^v F_v(x) \, dx\right) F_v(v) \right\vert_{v=\underline{v}}^\infty -
          \int_{\underline{v}}^\infty F_v(x)^2 \, dx \\
        &= \int_0^\infty F_v(x) \, dx - F_v(\underline{v}) \int_0^{\underline{v}} F_v(x) \, dx -
          \int_{\underline{v}}^\infty F_v(x)^2 \, dx \\
    &= \int_{\underline{v}}^{\infty} F_v(x) (1-F_v(x)) \, dx + (1-F_v(\underline{v}))
       \int_0^{\underline{v}} F_v(x) \, dx.
      \end{split}
    \]
    Combining, we have that
    \[
      \begin{split}
        \Pi_B & =
      \int_{\underline{v}}^{\infty} F_v(x) (1-F_v(x))\, dx + (1-F_v(\underline{v}))  \int_0^{\underline{v}} F_v(x)\, dx \\
      & \quad +\E[v - L| L < v < \underline{v}] F_v(\underline{v}) (F_v(\underline{v})- F_v(L)).
      \end{split}
    \]
    Observe that as $L \to 0$, the latter terms vanish and we are left with the first term evaluated
    at $\underline{v}=0$, i.e., the payoff of an exchange option as in Theorem \ref{theorem:bob_profit_0}.

    Let us go through this term by term. Integration by parts on the first term yields
    \[
      \begin{split}
        \int_{\underline{v}}^{\infty} F_v(x) (1-F_v(x))\, dx
        & =
          \left. x F_v(x) (1-F_v(x)) \right\vert_{x=\underline{v}}^\infty
          - \int_{\underline{v}}^\infty x (1- 2F_v(x)) f_v(x)\, dx\\
        &= -\underline{v} F_v(\underline{v}) (1-F_v(\underline{v}))
          - \int_{\underline{v}}^\infty x (1- 2F_v(x)) f_v(x)\, dx \\
        &= -\underline{v} F_v(\underline{v}) (1-F_v(\underline{v}))
          + \E\left[ \max(v_1,v_2) \I{\max(v_1,v_2) \geq \underline{v}} \right]
        \\
        & \quad
          - \E\left[ v \I{v \geq \underline{v}} \right].
      \end{split}
    \]
    Similarly, the second term can be rewritten as
    \begin{equation}\label{eq:bob-2}
      (1-F_v(\underline{v}))  \int_0^{\underline{v}} F_v(x)\, dx
    =  (1-F_v(\underline{v})) F_v(\underline{v})\E[\underline{v} - v| v < \underline{v}].
    \end{equation}
    Combining, and cancelling terms, we have that
    \[
      \begin{split}
        \Pi_B
        & = \E\left[\max\left(v_1 \I{v_1 \geq \underline{v}} - v_2 \I{v_2 \geq \underline{v}},
          0\right)\right]
        \\
        & \quad
          - (1-F_v(\underline{v})) \E[v \I{v< \underline{v}}]
        \\
        &\quad  + F_v(\underline{v}) \E[(v - L)\I{L < v < \underline{v}}]
      \end{split}
    \]
    Using the definition of $\underline{v}$,
    \[
      \E[v \I{v< \underline{v}}] = L F_v(\underline{v}).
    \]
    Then,
    \[
      \begin{split}
        \Pi_B
        & = \E\left[\max\left(v_1 \I{v_1 \geq \underline{v}} - v_2 \I{v_2 \geq \underline{v}},
          0\right)\right]
        \\
        & \quad
          + F_v(\underline{v}) \left(  \E[(v - L)\I{L < v < \underline{v}}] - (1-F_v(\underline{v})) L \right)
      \end{split}
    \]
    \end{proof}

    \subsection{Proof of Theorem \ref{theorem:uncertain-timing}}

    First note that as in Lemma \ref{lemma:no_pure_strategies}, there is no equilibrium in pure strategies except for the degenerate case where Alice bids $0$ with probability $1$ (possible, as we shall see below, for some settings where $\alpha$ is sufficiently high). To see why, note that if Alice bids a pure strategy strictly above $0$ with probability $1$, then Bob's best response is to exactly match her bid (recall the tiebreaking rule) when profitable to do so, and pass otherwise. Note that in this case Alice only profits when Bob does not act in time, and makes a loss otherwise, so bidding $0$ would have been a strictly better strategy.

    Next note that in any mixed strategy equilibrium for Alice, the lowest bid in the support of her mixed strategy must be exactly $0$. To see why note that otherwise, this lowest bid would again be unprofitable by the same argument as above.

    Note that since she loses for sure whenever Bob acts in time, Alice's expected payoff when she bids $0$ is $\alpha (\mathbb{E}[v])$. Therefore this must be her payoff over all bids in the support of her mixed strategy in equilibrium.

    As in Lemma \ref{lemma:alice_bids}, now suppose that Bob's bid as a function of the observed value $v$ is given by $\beta(v)$. Then, we must have that for any bid $b_A$ in the support of Alice's mixed strategy, we must have that:
    \begin{align*}
            &\alpha (\mathbb{E}[v])  = \E\left[ (v - b_A) \I{b_A > \beta(v)} \right] \\
            & \hphantom{(\alpha (\mathbb{E}[v]))}= \E\left[  v - b_A | v < \beta^{-1}(b_A) \right] P(v < \beta^{-1}(b_A)) \\
            & \hphantom{(\alpha (\mathbb{E}[v]))}= \left( \E\left[  v | v < \beta^{-1}(b_A) \right] - b_A \right)
              P(v < \beta^{-1}(b_A)). \\
    \implies   & b_A  = \E\left[  v | v < \beta^{-1}(b_A) \right] - \frac{\alpha (\mathbb{E}[v])}{P(v < \beta^{-1}(b_A))}.
    \end{align*}
    Doing again a change of variables so that $v$ is the value at which Bob would bid $b_a$, we have that:
    \begin{align*}
        \beta(v) = \E\left[ \tilde{v} | \tilde{v} < v \right] - \frac{\alpha (\mathbb{E}[\tilde{v}])}{P(\tilde{v} < v)}.
    \end{align*}
    Therefore there must be a largest $\bar{v}_\alpha >0$ such that $\beta(v)=0$ for all $v\leq \bar{v}_\alpha$, where $\bar{v}_\alpha$ satisfies:
    \begin{align*}
        \alpha (\mathbb{E}[v]) = \E\left[  v | v < \bar{v}_\alpha \right] \mathbb{P}(v < \bar{v}_\alpha).
    \end{align*}

    Note that given the conjectured strategies for Alice and Bob, it is clear that Alice is indifferent over all bids in the support of her mixed strategy. We are, therefore, only left to check that Bob's strategy is optimal given Alice's.

    Now, note that we can write $\beta(v)$ as:
      \begin{align}
          \beta(v) &= \int_0^v x\frac{dF_v(x)}{F_v(v)} - \frac{\alpha (\mathbb{E}[v])}{F_v(v)},\notag\\
           & = v - \frac{\int_0^v F_v(x) dx}{F_v(v)}- \frac{\alpha (\mathbb{E}[v])}{F_v(v)}. \label{eq:beta_v2}
      \end{align}
      Where the latter equality follows from integration by parts. Note that then we have that
      \begin{align}\label{eq:beta_prime2}
      \beta'(v) = \frac{f_v(v) \int_0^v F_v(x) dx - \alpha (\mathbb{E}[v])}{(F_v(v))^2}.
      \end{align}
      To ensure that Bob bids in the way described by $\beta(\cdot)$, Alice must bid according to a distribution $F_a, f_a$ such that:
      \begin{align*}
          &\beta(v) \in \arg \max_{b \in [0,1]} (v- b)F_a(b),\\
          \implies &(v- \beta(v)) f_a(\beta(v)) - F_a(\beta(v))  = 0, \\
          \implies & v = \beta(v) + \frac{F_a(\beta(v))}{f_a(\beta(v))}.
      \end{align*}

      Let's define a distribution $H(v) = F_a(\beta(v))$. By definition $H(v)$ is a valid CDF on the reals. Note that in this case $h(v) = H'(v) = f_a(\beta(v)) \beta'(v)$. Therefore we implicitly need to find $H(v)$ such that:
      \begin{align*}
              &v - \beta(v) = \frac{H(v)\beta'(v)}{h(v)}. \\
              \implies & \frac{dH(v)}{H(v)} = \frac{\beta'(v) dv}{v- \beta(v)}.\\
              \implies & H(v) = \exp \left( \int_0^v \frac{\beta'(x) dx}{x-\beta(x) } + C\right).
      \intertext{Note that since $\lim _{v \to \infty} H(v) = 1$, $C = - \int_0^\infty \frac{\beta'(x) dx}{x-\beta(x) }$.}
      \implies & H(v) = \exp \left(  - \int_v^\infty \frac{\beta'(x) dx}{x-\beta(x) } \right).
      \end{align*}
      Now substituting in \eqref{eq:beta_prime2} and \eqref{eq:beta_v2} and simplifying, we get that:
      \begin{align*}
          H(v) &= \exp \left(  - \int_v^\infty \frac{\beta'(x) dx}{x-\beta(x) } \right)\\
       &= \exp \left(  - \int_v^\infty \frac{\frac{f_v(x) \int_0^x F_v(y) dy-  f_v(x)\alpha (\mathbb{E}[v])}{(F_v(x))^2}}{\frac{\int_0^x F_v(y) dy}{F_v(x)}- \frac{f_v(x)\alpha (\mathbb{E}[v])}{F_v(x)}} dx \right)\\
       &= \exp \left(  - \int_v^\infty \frac{f_v(x)}{(F_v(x))} dx \right)\\
       &= F_v(v),
      \end{align*}
      which is exactly the distribution of $H(v)$ that would result under Alice's conjectured strategy.


}

\end{document}